\documentclass[journal]{IEEEtran}

\usepackage{cite}
\usepackage{amsmath,amssymb,amsfonts,amsthm}
\usepackage{algorithmic}
\usepackage{graphicx}
\usepackage{textcomp}
\usepackage{xcolor}
\usepackage{bm}        
\usepackage{booktabs}  
\usepackage{subfig}
\usepackage{subcaption}
\usepackage{physics}
\usepackage{hyperref}
\hypersetup{
    colorlinks=true,    
    citecolor=blue,     
    linkcolor=blue,     
    urlcolor=blue       
}
\newtheorem{theorem}{Theorem}
\newtheorem{lemma}{Lemma}

\newtheorem{assumption}{Assumption}
\newtheorem{remark}{Remark}
\newtheorem{proposition}{Proposition}

\newcommand*{\vecbf}[1]{\mathbf{#1}} 
\newcommand*{\matbf}[1]{\mathbf{#1}} 
\newcommand*{\gbf}[1]{\bm{#1}} 
\newcommand*{\myset}[1]{\mathcal{#1}} 

\newcommand{\R}{\mathbb{R}}
\newcommand{\F}{\mathbf{F}}
\newcommand{\p}{\mathbf{p}}
\newcommand{\uu}{\mathbf{u}}
\newcommand{\q}{\mathbf{q}}
\newcommand{\SO}{\mathbf{SO}}

\newcommand{\Proj}{\mathrm{Proj}}
\newcommand{\M}{\mathcal{M}}
\newcommand{\Q}{\mathcal{Q}}
\newcommand{\Span}{\mathrm{span}}

\begin{document}

\title{Element-based Formation Control: a Unified Perspective from Continuum Mechanics}

\author{Kun Cao, \IEEEmembership{Member, IEEE}, and Lihua Xie, \IEEEmembership{Fellow, IEEE}%
\thanks{This work was supported by [Grant Number].}%
\thanks{K. Cao (corresponding author) is with the Department of Control Science and Engineering, College of Electronics and Information Engineering, Tongji University, Shanghai 201804, China,  the Shanghai Institute of Intelligent Science and Technology, National Key Laboratory of Autonomous Intelligent Unmanned Systems, and Frontiers Science Center for Intelligent Autonomous Systems, Ministry of Education, Beijing 100816, China~{\tt \footnotesize caokun@tongji.edu.cn}\\
L.~Xie is with the School of Electrical and Electronic Engineering, Nanyang Technological University, 50 Nanyang Avenue, Singapore 639798~{\tt \footnotesize elhxie@ntu.edu.sg}\\
}}
\maketitle

\begin{abstract}
This paper establishes a unified element-based framework for formation control by introducing the concept of the deformation gradient from continuum mechanics.
Unlike traditional methods that rely on geometric constraints defined on graph edges, we model the formation as a discrete elastic body composed of simplicial elements. 
By defining a generalized distortion energy based on the local deformation gradient tensor, we derive a family of distributed control laws that can enforce various geometric invariances, including translation, rotation, scaling, and affine transformations. 
The convergence properties and the features of the proposed controllers are analyzed in detail. 
Theoretically, we show that the proposed framework serves as a bridge between existing rigidity-based and Laplacian-based approaches. Specifically, we show that rigidity-based controllers are mathematically equivalent to minimizing specific projections of the deformation energy tensor. 
Furthermore, we establish a rigorous link between the proposed energy minimization and Laplacian-based formation control. Numerical simulations in 2D and 3D validate the effectiveness and the unified nature of the proposed framework.
\end{abstract}

\begin{IEEEkeywords}
Multi-agent systems, formation control, deformation gradient, continuum mechanics, rigidity theory.
\end{IEEEkeywords}

\section{Introduction}
\label{sec:intro}
\IEEEPARstart{M}{ulti-agent} formation control has garnered significant research attention in recent decades due to its broad applications in satellite clustering, cooperative surveillance, and swarming robotics~\cite{liu2024survey,su2026bearing}. 
The fundamental objective is to drive a group of agents to achieve a desired geometric shape while maintaining specific invariances for maneuvering when encountering complex environments.

For decades, the field has been bifurcated into two dominant paradigms: rigidity-based methods and Laplacian-based methods.
Rigidity-based approaches focus on sparse, edge-level geometric constraints, depending on the sensed variables, these include displacement-based control for translation-invariant formations \cite{oh2015survey,olfati2004consensus,ji2007distributed}, distance-based control for rigid formations \cite{sun2017distributed,sun2018conservation,DBLP:journals/automatica/KwonSAA22}, bearing-based control for scale-invariant formations \cite{zhao2019bearing,DBLP:journals/tac/TrinhTA20}, and angle-based or ratio-of-distance (RoD)-based control for similarity-invariant formations \cite{chen2019anglerigidityusagestabilize,jing2019angle, cao2019ratio,chen2025angle}. 
While effective, these strategies often treat geometric constraints in isolation, lacking a unified physical interpretation that links local interactions to the holistic deformation of the formation.

On the other hand, Laplacian-based methods, which use barycentric coordinates and stress matrices to construct the signed Laplacian \cite{lin2016distributed,lin2016graph,lin2016necessary,han2018barycentric,zhao2018affine,yang2018constructing,DBLP:journals/tac/LiuCWCQL26}, provide a powerful framework for formation control. 
These methods rely on the global properties of the Laplacian matrix (e.g., kernel space and positive definiteness) to stabilize formations and usually lead to global convergence results.
However, as we categorized above as two seemingly irrelevant streams, the underlying connection between the geometric constraints in rigidity-based methods and the graph properties in Laplacian-based methods remains implicit and unexplored. 

In parallel, the computer graphics communities have developed robust methods for physics-based simulations, such as As-Rigid-As-Possible (ARAP)~\cite{sorkine2007rigid} and As-Similar-As-Possible (ASAP) surface modeling. 
Rooted in continuum mechanics~\cite{bonet_wood_1997}, these methods minimize a distortion energy defined on mesh elements (triangles or tetrahedra) to simulate physical behaviors. 
Although these techniques are highly successful in their domain, they typically focus on solving offline optimization problems for the next state, rather than designing distributed feedback control laws for dynamical agents. 
Some recent works have attempted to apply continuum mechanics to multi-agent systems. For instance,~\cite{Rastgoftar2016continuum} employs homogeneous transformations for leader-based deformation, but the followers still rely on the Laplacian matrix. 
Others, such as~\cite{freudenthaler2021formation}, derive control laws from continuum models by taking the limit as the number of agents approaches infinity (PDE-based control), which does not focus on the discrete nature of inter-agent measurements and specific geometric invariances in finite networks.

Motivated by these observations, this paper proposes a fundamental paradigm shift: modeling the multi-agent formation not as a sparse collection of edges, but as a discrete elastic body governed by continuum mechanics. 
By elevating the fundamental unit of interaction from one-dimensional graph edges to multidimensional simplicial elements—such as triangles in 2D or tetrahedra in 3D—we introduce the deformation gradient as the primary descriptor of formation error. 
This approach moves beyond the ``patchwork" of edge-based constraints and instead treats the formation as a coherent material ensemble endowed with intrinsic geometric fidelity.

The conceptual elegance of this framework lies in its ability to serve as a theoretical bridge between the long-standing dichotomy of rigidity-based and Laplacian-based methods. 
We demonstrate that traditional rigidity constraints—such as distances, bearings, and angles—emerge naturally as specific sparse projections of the dense deformation energy tensor. 
Furthermore, by analyzing the Dirichlet energy of the deformation gradient, we reveal a rigorous mathematical link to Laplacian-based control. 
This unification provides a first-principle perspective on formation control, showing that disparate methods are, in fact, localized instantiations of a universal energy minimization principle.
The main contributions of this work are twofold:
\begin{enumerate}
    \item We propose a generalized energy minimization framework based on the local deformation gradient. 
    By instantiating specific energy density functions, our framework systematically derives controllers that enforce translation, rotation, scaling, and similarity invariances within a single, consistent codebase.
    \item We provide a comprehensive mapping of existing rigidity-based and Laplacian-based methods into our element-based framework.
    Specifically, we prove that rigidity-based constraints are sparse samplings of the deformation tensor, while Laplacian-based methods correspond to the minimization of its Dirichlet energy.
\end{enumerate}

The remainder of this paper is organized as follows. Section~\ref{sec:problem} formulates the problem. Section~\ref{sec:element} details the element-based controller design. 
Section~\ref{sec:theo} provides the theoretical analysis and connections with existing methods. 
Simulation results are presented in Sec.~\ref{sec:simulation}, followed by conclusions in Sec.~\ref{sec:conclusion}.

\emph{Notations:} In this paper, $a$, $\vecbf{a}$, $\matbf{a}$, and $\myset{A}$ denote the scalar, vector, matrix, and set, respectively. 
Let $\|\vecbf{a}\|$, $\|\vecbf{A}\|_{F}$ denote the $2$-norm of vector and Frobenius norm of matrix. 
Denote by $\matbf{A}^{\top}$ and $\matbf{A}^{-1}$ the transpose and inverse of $\matbf{A} \in \mathbb{R} ^{n \times n}$, respectively. 
Let $\matbf{I}_{n} \in \mathbb{R}^{n \times n}$ be the $n$-dimensional identity matrix and $\vecbf{1}_{n}$ be the $n$-dimensional column vector with all entries of $1$. 
Let $\odot$ denote the tensor contraction operation.


\section{Problem Formulation}
\label{sec:problem}

\begin{figure}
    \centering
    \includegraphics[width=\linewidth]{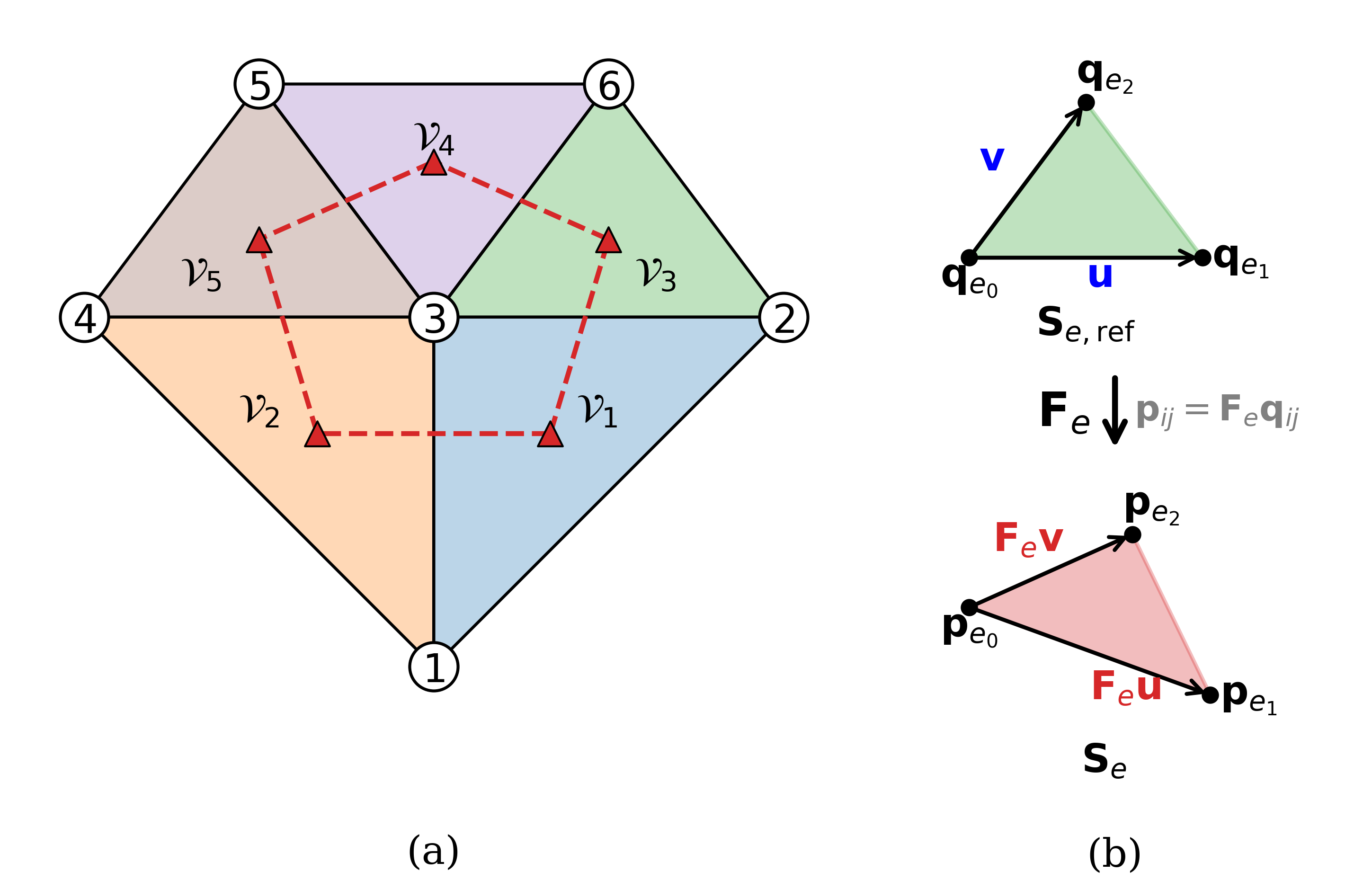}
    \caption{(a) Simplicial complex representation of the multi-agent formation. 
    The numbered circles and solid black lines denote the physical agents and the 1D communication topology, respectively. 
    The macroscopic formation is tiled by five 2D simplicial elements (colored regions). 
    The red triangular nodes ($\mathcal{V}_e$) and dashed lines illustrate the corresponding dual graph, which models the adjacency and interaction pathways among neighboring elements. 
    (b) Illustration of the deformation gradient mapping a reference element to its current configuration in 2D space.}
    \label{fig:combined_fig1}
\end{figure}

Consider a multi-agent system consisting of $N$ agents moving in a $d$-dimensional space ($d=2$ or $3$). 
Let $\p_i \in \R^d$ denote the position of agent $i$, and $\p = [\p_1^{\top}, \dots, \p_N^{\top}]^{\top} \in \R^{dN}$ denote the configuration of the entire formation.

Traditional rigidity-based approaches developed various graphical conditions for rigidity \cite{zhao2019bearing,cao2019ratio,chen2025angle,jing2019angle} over a notion of framework $\mathcal{F} = (\mathcal{G}, \p)$, where $\mathcal{G} = (\mathcal{V}, \mathcal{H})$ is the underlying measurement/communication graph.
Here, $\mathcal{V} = \{1, \dots, N\}$ is the set of agents and $\mathcal{H} \subseteq \mathcal{V} \times \mathcal{V}$ represents the measurement/communication links, as illustrated by the numbered circles and solid black lines in Fig. \ref{fig:combined_fig1}(a).

In contrast, we assume that our formation is defined on a \textit{simplicial complex} (e.g., triangles in 2D, tetrahedra in 3D), which is the basic notion of element in continuum mechanics \cite{bonet_wood_1997}.
Specifically, a $d$-dimensional element (simplex) $e$ is formed by $d+1$ fully connected agents.
Let $\mathcal{V}_e = \{e_0, \dots, e_d\} \subseteq \mathcal{V}$ denote the vertex set of element $e$, and $\mathbf{P}_e = [\mathbf{p}_{e_0}, \dots, \mathbf{p}_{e_d}]$ denote its corresponding configuration set.
We define $\mathcal{E}$ as the collection of all such elements tiling the formation (the colored regions in Fig. \ref{fig:combined_fig1}(a)). 

To facilitate the coordination among adjacent elements, we redefine the formation architecture via a dual graph $\hat{\mathcal{G}} = (\hat{\mathcal{V}}, \hat{\mathcal{H}})$. 
As depicted in Fig. \ref{fig:combined_fig1}(a), the dual node set $\hat{\mathcal{V}} = \{\mathcal{V}_1, \dots, \mathcal{V}_{|\mathcal{E}|}\}$ represents the elements themselves (indicated by the red triangular nodes), and the dual edge set $\hat{\mathcal{H}}$ represents the links between two elements which share $(d-1)$-dimensional faces (the red dashed lines).

Let $\q = [\q_1^{\top}, \dots, \q_N^{\top}]^{\top} \in \R^{dN}$ be the nominal or reference configuration (the desired shape). 
We consider the following single-integrator dynamics for each agent:
\begin{equation} \label{eq:dyn}
    \dot{\p}_i = \uu_{i}, \quad i = 1, \dots, N,
\end{equation}
where $\uu_i \in \R^d$ is the control input for agent $i$.  
The objective is to drive the agents to a configuration $\p$ that is equivalent to $\q$ under each of the following transformations:
\begin{equation} \label{eq:M}
    \begin{aligned}
        & \mathrm{(Translation)} & \M_{\mathrm{T}} & = \left\{\p \in \mathbb{R}^{Nd} \mid \p_i = \q_i + \mathbf{t}, \right. \\
        & & & \left. \mathbf{t} \in \R^d, \forall i \right\} \\
        & \mathrm{(Rotation)} & \M_{\mathrm{TR}} & = \left \{\p \in \mathbb{R}^{Nd} \mid \p_i = \mathbf{R}\q_i + \mathbf{t},  \right. \\
        & & & \left. \mathbf{R} \in \SO(d), \mathbf{t} \in \R^d, \forall i \right \} \\
        & \mathrm{(Scaling)} & \M_{\mathrm{TS}} & = \left\{\p \in \mathbb{R}^{Nd} \mid \p_i = s \q_i + \mathbf{t}, \right.  \\
        & & & \left. s \in \R, \mathbf{t} \in \R^d, \forall i \right\}  \\
        & \mathrm{(Similarity)} & \M_{\mathrm{TRS}} & = \left\{\p \in \mathbb{R}^{Nd} \mid \p_i = s \mathbf{R}\q_i + \mathbf{t}, \right.  \\
        & & &  \hspace{-0.5cm}\left. s \in \R^{+}, \mathbf{R} \in \SO(d), \mathbf{t} \in \R^d, \forall i \right\}.
    \end{aligned}
\end{equation}

To ensure well-posedness of the problem, we make the following standard assumptions.
\begin{assumption}[Non-degenerate Reference]
    \label{ass:non-degenerate}
    For every simplex $e \in \mathcal{E}$, the vertices in the reference configuration $\{\q_j\}_{j \in \mathcal{V}_e}$ are affinely independent.
\end{assumption}

\begin{assumption}[Topology Connectivity]
    \label{ass:topo}
    The dual graph $\hat{\mathcal{G}}$ is connected. 
\end{assumption}

Assumption \ref{ass:non-degenerate} is standard in continuum mechanics to ensure well-defined deformation gradients (will be introduced in the sequel); it is also standard in rigidity-based literature to avoid singularities for rigidity matrices~\cite{zhao2019bearing,cao2019ratio,jing2019angle}.
Assumption \ref{ass:topo} guarantees that the formation behaves as a single connected body, preventing isolated sub-formations.
It is implicitly assumed that the reference configuration $\q$ is known in the problem formulation. 
However, it is also valid if only a set of partial information, e.g., distance, bearing, etc., is available. 
We defer the discussion on this to Sec. \ref{sec:theo} to avoid distraction.

\section{Element-based Controller Design}
\label{sec:element}
In this section, we shall present our element-based formation controller design.
Before that, we introduce the concept of the discrete deformation gradient for each element in continuum mechanics. 
Then, we can construct various energy functions based on this and derive our controllers.

\subsection{Discrete Deformation Gradient}
Discrete deformation gradient, originally defined in the finite strain theory in continuum mechanics, describes the relationship between the reference and current configuration and expresses motion locally around a point. 
If we view our formation as an elastic body, and the formation element as a single element in the finite element method, we can define this concept similarly for our formation element. 
In particular, for a specific element $e$, let vertex $e_{0}$ be the local origin, define the reference and the current shape matrix as:
\begin{equation*}
    \begin{aligned}
        \mathbf{S}_{e,\mathrm{ref}} &= [\q_{e_{1}} - \q_{e_{0}}, \dots, \q_{e_{d}} - \q_{e_{0}}],  \\
        \mathbf{S}_{e} &= [\p_{e_{1}} - \p_{e_{0}}, \dots, \p_{e_{d}} - \p_{e_{0}}]. 
    \end{aligned}
\end{equation*}
The deformation gradient $\F_e \in \R^{d \times d}$ maps the reference vectors to the current spatial vectors, i.e., $\F_e \mathbf{S}_{e,\mathrm{ref}} = \mathbf{S}_{e}$, hence
\begin{equation} \label{eq:def_F}
    \F_e = \mathbf{S}_{e} \mathbf{S}_{e,\mathrm{ref}}^{-1},
\end{equation}
where $\mathbf{S}_{e,\mathrm{ref}}^{-1}$ is well-defined in view of Assumption \ref{ass:non-degenerate}. 
An illustration of the deformation gradient can be seen from Fig. \ref{fig:combined_fig1}(b).
It can be found from the definition that $\F_e$ captures all local geometric distortions, including rotation, stretch, and shear, which suggests that this is a good option for characterizing the formation error.

\subsection{Controller Design}
In this section, we shall see how to design controllers based on the discrete deformation gradient $\F_e$ introduced above. 
Firstly, we consider the following total distortion energy function $V(\p)$, which is the sum of local energy densities $\Psi(\F_e)$ defined on each element:
\begin{equation} \label{eq:energy}
    V(\p) = \sum_{e \in \mathcal{E}} w_e \cdot \Psi(\F_e),
\end{equation}
where $w_e$ is the volume of the reference element, i.e., $w_e = |\det(\mathbf{S}_{e,\mathrm{ref}})|/d!$ \footnote{From a pure optimization-based control perspective, any strictly positive weights $w_e > 0$ preserve the convexity of the local energy and guarantee convergence, as will be shown in Sec. \ref{subsec:conv}. 
However, choosing $w_e$ as the element's geometric volume ensures exact consistency with the volume integration of strain energy in continuum mechanics. 
Furthermore, as will be discussed in Sec. \ref{subsec:con}, these weights can be treated as tunable parameters (or even take negative values) to recover specific Laplacian matrix properties.}.
It should be noted that this form is highly general since by choosing different $\Psi(\F)$ (the subscript $(\cdot)_{e}$ will be omitted in the sequel if no confusion will be caused), we can instantiate the following different types of energy functions and derive controllers respectively. 
\subsubsection{Translation-invariant distortion energy}
It can be found that any $\F \neq \mathbf{I}$ will cause a distortion of the shape of the element, while $\F = \mathbf{I}$ means that the current element is exactly a translation of the reference element.
Based on this observation, we can define the following translation-invariant distortion energy function
\begin{equation}
    \label{eq:energy_t}
    \Psi_{\mathrm{T}}(\F) = \|\F - \mathbf{I}\|_F^2.
\end{equation}
This penalizes any deviation from a pure translation.

\subsubsection{Rotation-invariant distortion energy}
From the point view of SVD decomposition, any square matrix $\mathbf{F}$ can be decomposed as $\mathbf{F} = \mathbf{U} \bm{\Sigma} \mathbf{V}^{\top}$, where $\mathbf{U}, \mathbf{V} \in \SO(d)$ and $\bm{\Sigma} = \mathrm{diag}(\sigma_1, \dots, \sigma_d)$ with singular values $\sigma_1, \dots, \sigma_d > 0$. 
In view of this, we can see that $(\mathbf{U}, \mathbf{V}^{\top})$ represents the rotational part of the deformation while $\bm{\Sigma}$ represents the pure stretch part.
Based on this observation, we can define the following rotation-invariant distortion energy function
\begin{equation} \label{eq:energy_tr}
    \Psi_{\mathrm{TR}}(\F) = \|\F - \mathbf{R}\|_F^2,
\end{equation}
where $\mathbf{R} = \mathbf{U} \mathbf{V}^{\top}$.
This penalizes any deviation from a combination of translation and rotation.

\subsubsection{Scaling-invariant distortion energy}
As a `counterpart' to the above, instead of penalizing the deviation from the rotation part, we can also penalize the deviation from a pure scaling.
In particular, we can define the following scaling-invariant distortion energy function
\begin{equation} \label{eq:energy_ts}
    \Psi_{\mathrm{TS}}(\F) = \|\F -  s_{\mathrm{TS}}\mathbf{I}\|_F^2,
\end{equation}
where $s_{\mathrm{TS}} = \trace(\F)/d$ is the average scaling factor.

\subsubsection{Similarity-invariant distortion energy}
As the final type of energy function which includes all the above invariances, we consider the following definition:
\begin{equation} \label{eq:energy_trs}
    \Psi_{\mathrm{TRS}}(\F) = \|\F - s_{\mathrm{TRS}} \mathbf{R}\|_F^2,
\end{equation}
where $s_{\mathrm{TRS}} = \trace(\mathbf{R}^{\top}\F)/d$ and $\mathbf{R}$ is defined as above.

These choices of regressor (i.e., the second term inside the Frobenius norm) will be evident in the subsequent design.
Here, we first present the general distributed control law based on the above energy functions, as has usually been done in rigidity-based controllers.
Specifically, we define the control input for agent $i$ as the negative gradient of the total energy w.r.t. its position, i.e.,
\begin{equation} \label{eq:control_law}
    \uu_i = - \nabla_{\p_i} V(\p) = - \sum_{e \in \mathcal{E}} w_{e} \frac{\partial \Psi(\F_e)}{\partial \F_e} \frac{\partial \F_e}{\partial \p_i}.
\end{equation}
In the above, $\frac{\partial \F_e}{\partial \p_i}$ is the derivative of the deformation gradient with respect to agent $i$'s position.
For agent $e_{0}$ acting as the base vertex of element $e$, the derivative is
\begin{equation*}
    \frac{\partial \F_e}{\partial \p_{e_{0}}} = -\mathbf{I}  \cdot \mathbf{S}_{e,\mathrm{ref}}^{-1}.
\end{equation*}
For agent $i$ acting as vertex $e_{j}$ in element $e$, $j = 1, \dots, d$, the derivative is
\begin{equation*}
    \frac{\partial \F_e}{\partial \p_{e_{j}}} = \mathbf{E}_j \cdot \mathbf{S}_{e,\mathrm{ref}}^{-1},
\end{equation*}
where $\mathbf{E}_j$ is the $j$-th standard basis vector.
The other term $\frac{\partial \Psi(\F_e)}{\partial \F_e}$ is the gradient of the energy function w.r.t. the deformation gradient, which, borrowing terminology from continuum mechanics, is called the first Piola-Kirchhoff stress\cite{bonet_wood_1997}. 
In our context, this is simply the partial derivative serving as an intermediate quantity in the gradient computation with no independent physical meaning beyond that.
To facilitate the clarity of presentation, we first use a unified notation for the above energy functions
\begin{equation}
    \Psi(\F) = \|\F - \Proj(\F)\|_F^2,
\end{equation}
where $\Proj(\F)$ is self-evident from the above and can be written as the optimal point of optimization problems detailed in Appendix \ref{app:proj}.
With these definitions, one has
\begin{equation*}
    \begin{aligned}
            \frac{\partial \Psi(\F)}{\partial \F} & = 2(\F - \Proj(\F)) - 2 \cdot \frac{\partial \Proj(\F)}{\partial \F} \odot (\F - \Proj(\F)) \\ 
            & = 2(\F - \Proj(\F)),
    \end{aligned}
\end{equation*}
where the second equality follows from the Envelope theorem, and the detailed proof can be found in Appendix \ref{app:grad}.
This simplification explains the choice of the regressor in the above, since closed-form expressions may not be obtained otherwise. 

This control law relies only on the measurement of relative positions of neighbors in each element; we call it an element-based controller, and it is distributed by design. 

\subsection{Further Discussions}

As we mentioned throughout the paper, the concept of deformation gradient is originally from continuum mechanics, which describes the local motion of a continuous body. 
If we view the formation as a discrete elastic body, then the formation control problem can be viewed as a physics-based simulation in computer graphics, where the goal is to drive the body to a stress-free state defined by the elastic model.
This key insight enables us to borrow various material models from continuum mechanics to design formation controllers with different invariance properties.
In the above, we have presented several specific types of distortion energy functions that preserve shapes.
Different types of elastic models, e.g., linear elasticity model, hyperelastic model, viscoelastic model, can be defined based on the deformation gradient to characterize different material properties \cite{bonet_wood_1997}.
These models can be used similarly to guide the design of flexible/elastic formation controllers to enhance the adaptability of the formation to complex environments, which are subject to future work.

In the controller design, each element only uses the local information within the element; neither global information on the formation nor explicit communication is needed. 
This design resembles the communication-free graph-theoretical controllers, where each agent only uses relative measurements from its neighbors, and the final position of the formation is implicitly defined by the initial configuration.
If the leaders/anchors are introduced, the final position is implicitly designed by these leaders/anchors.
Similarly, our method can be easily extended to the leader-follower formation control scenario by adding external forces to a leading element, i.e., defining the distortion energy of this element by $\Psi(\F) = \|\F - \mathbf{X}\|_F^2$, where $\mathbf{X}$ is prescribed and independent of $\F$.
On the other hand, if explicit communication is allowed, one can also design the formation to a desired global position by adding external forces to all elements, i.e., defining the distortion energy of each element by $\Psi(\F) = \|\F - \mathbf{X}\|_F^2$, where $\mathbf{X}$ is independent of $\F$ which can be estimated by a consensus algorithm through communication in a distributed manner.
The above formulation can be viewed as a coupling via a common external force, which can be interpreted as a field force. 
Another method of coupling can be added via strain gradients, by adding $\|\F_{i} - \F_{j}\|_F^2$, where $\F_{i}$ and $\F_{j}$ are the deformation gradients of neighboring elements.


\section{Theoretical Analysis} \label{sec:theo}
In this section, we shall first provide some convergence analysis of the proposed controllers and discuss some features for specific controllers, and then we draw some connections of the proposed framework to existing methods, including both rigidity-based and Laplacian-based methods.

\subsection{Properties of Convergence} \label{subsec:conv}
The proposed element-based controllers essentially are gradient-based controllers minimizing the total distortion energy defined on the formation, and hence the analysis of the stability and convergence properties critically depends on their convexity, which resembles rigidity-theory-based controllers, where the analysis mainly relies on the rigidity properties of the underlying graph. 
We formally state the main results as follows. 

\begin{lemma} \label{lem:Psi}
    If Assumptions \ref{ass:non-degenerate} and \ref{ass:topo} hold, $\Psi_{(\cdot)}(\F_{e}^{*}) = 0$, $\forall e \in \mathcal{E}$, implies $\p \in \M_{(\cdot)}$ with $\M_{(\cdot)}$ being defined in \eqref{eq:M} and $(\cdot) \in \{\mathrm{T, TR, TS, TRS}\}$.
\end{lemma}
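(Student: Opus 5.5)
Zero energy on an element forces its deformation gradient to be a constant ``transformation'' matrix, and connectivity of the dual graph forces that matrix to agree across elements. For each type $(\cdot)$ we characterize the zero set of $\Psi_{(\cdot)}$. Since $\Psi_{(\cdot)}(\F)=\|\F-\Proj(\F)\|_F^2$, we have $\Psi_{(\cdot)}(\F_e)=0$ iff $\F_e=\Proj(\F_e)$. This gives:
\begin{itemize}
\item $\F_e=\mathbf{I}$ for $\mathrm{T}$;
\item $\F_e=\mathbf{R}_e$ with $\mathbf{R}_e=\mathbf{U}\mathbf{V}^\top\in\SO(d)$ for $\mathrm{TR}$;
\item $\F_e=s_e\mathbf{I}$ for $\mathrm{TS}$;
\item $\F_e=s_e\mathbf{R}_e$ for $\mathrm{TRS}$.
\end{itemize}
In each case $\F_e$ is the linear part of a transformation in the corresponding group. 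Because $\F_e\mathbf{S}_{e,\mathrm{ref}}=\mathbf{S}_e$, this yields, for all vertices $j\in\mathcal{V}_e$,
\begin{equation*}
\p_j=\F_e\q_j+\mathbf{t}_e,\qquad \mathbf{t}_e:=\p_{e_0}-\F_e\q_{e_0}.
\end{equation*}
So each element is individually an image of its reference under a map of the required type.

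Next we propagate along the dual graph. Take two elements $e,e'$ adjacent in $\hat{\mathcal{G}}$. They share a $(d-1)$-face with $d$ common vertices. On these vertices both affine maps $\q\mapsto\F_e\q+\mathbf{t}_e$ and $\q\mapsto\F_{e'}\q+\mathbf{t}_{e'}$ give the same $\p_j$. By Assumption~\ref{ass:non-degenerate} the $d$ shared reference points are affinely independent, so they span a $(d-1)$-dimensional affine hyperplane with direction space $W$, $\dim W=d-1$. Hence $(\F_e-\F_{e'})w=0$ for all $w\in W$, and the two translations agree once the linear parts agree.

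The main obstacle is upgrading agreement on the $(d-1)$-dimensional space $W$ to $\F_e=\F_{e'}$, since a general affine map is not fixed by a codimension-one face. This is precisely where the restricted structure is used:
\begin{itemize}
\item For $\mathrm{T}$ there is nothing to show.
\item For $\mathrm{TS}$, $(s_e-s_{e'})w=0$ with $W\neq\{0\}$ (as $d\ge2$) forces $s_e=s_{e'}$.
\item For $\mathrm{TR}$, $\mathbf{R}_e^\top\mathbf{R}_{e'}$ is a rotation fixing the hyperplane $W$ pointwise. It therefore also maps $W^\perp$ to itself, acting there as $\pm1$, and $\det=1$ forces $+1$, i.e.\ the identity. (A reflection across the shared face is excluded because it has determinant $-1$.)
\item For $\mathrm{TRS}$, norms on $W$ give $s_e=s_{e'}$, since $s>0$ and the singular values of $s\mathbf{R}$ are all $s$. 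This reduces to the rotation case.
\end{itemize}
One also checks that zero energy gives $\det\F_e>0$, so $s_e>0$, consistent with $\M_{\mathrm{TRS}}$. For $\mathrm{TS}$, $s$ may be any real number, matching the definition of $\M_{\mathrm{TS}}$.

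Finally, Assumption~\ref{ass:topo} and induction along paths in $\hat{\mathcal{G}}$ give a common pair $(\F,\mathbf{t})$ for all elements. Every agent belongs to some element (the elements tile the formation), so $\p_i=\F\q_i+\mathbf{t}$ for all $i$, i.e.\ $\p\in\M_{(\cdot)}$.
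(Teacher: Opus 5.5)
Your proof is correct and follows essentially the same route as the paper's. You write each element as $\p_j=\F_e\q_j+\mathbf{t}_e$, use the $d$ shared, affinely independent vertices of adjacent elements to get agreement of $\F_e$ and $\F_{e'}$ on a hyperplane, upgrade this to $\F_e=\F_{e'}$ case by case, and propagate along the connected dual graph. Your $W^\perp$/determinant argument in the rotation case is a cleaner justification than the paper's.

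One caveat, which the paper shares since it simply asserts $s>0$: your claim that zero similarity energy forces $\det\F_e>0$ fails for $\F_e=\mathbf{0}$. The fully collapsed configuration $\p_i\equiv\mathbf{t}$ therefore has zero $\Psi_{\mathrm{TRS}}$ energy but lies outside $\M_{\mathrm{TRS}}$ as defined with $s\in\R^{+}$. You need to exclude it explicitly or allow $s\ge 0$.
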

\begin{proof}
By definition of $\F_{e}$, one has $\p_{i} = \F_{e}^{*} \q_i + \mathbf{t}_{e}, \forall i \in \mathcal{V}_{e}$.
Consider any two adjacent elements $e$ and $e'$ in the connected dual graph $\hat{\mathcal{G}}$. 
They share a $(d-1)$-dimensional face composed of $d$ vertices. Let the set of these shared vertices be $\mathcal{V}_{e} \cap \mathcal{V}_{e'} = \{k_1, k_2, \dots, k_d\}$, one has:
\begin{equation}\label{eq:shared_vertex}
    \F_e^* \mathbf{q}_{k_m} + \mathbf{t}_e = \F_{e'}^* \mathbf{q}_{k_m} + \mathbf{t}_{e'}, k_m \in \mathcal{V}_{e}  \cap \mathcal{V}_{e'}.
\end{equation}
Choosing $k_1$ as the local origin and subtracting the equation of $k_1$ from the equations of the remaining $d-1$ vertices ($m = 2, \dots, d$), we eliminate the translation vectors:
\begin{equation*}
    (\F_e^* - \F_{e'}^*) (\mathbf{q}_{k_m} - \mathbf{q}_{k_1}) = \mathbf{0}, \quad m = 2, \dots, d.
\end{equation*}

By Assumption \ref{ass:non-degenerate}, the reference vertices are affinely independent, which implies that the $d-1$ vectors $\{\mathbf{q}_{k_m} - \mathbf{q}_{k_1}\}_{m=2,\dots,d}$ are linearly independent and span a $(d-1)$-dimensional hyperplane $\mathcal{P}$. 
Therefore, the two matrices $\F_e^*$ and $\F_{e'}^*$ act identically on this hyperplane: $\F_e^* \mathbf{x} = \F_{e'}^* \mathbf{x}, \forall \mathbf{x} \in \mathcal{P}$. 
We now evaluate this condition for each specific transformation group:

\begin{itemize}
    \item Translation ($\Psi_{\mathrm{T}}$): $\F_e^* = \F_{e'}^* = \matbf{I}$ is trivially satisfied.
    \item Rotation ($\Psi_{\mathrm{TR}}$): Let $\F_e^* = \mathbf{R}_e$ and $\F_{e'}^* = \mathbf{R}_{e'} \in \SO(d)$. The identical action on $\mathcal{P}$ implies $\mathbf{R}_{e'}^{-1} \mathbf{R}_e \mathbf{x} = \mathbf{x}$. 
    This indicates that the composite matrix $\tilde{\mathbf{R}} = \mathbf{R}_{e'}^{-1} \mathbf{R}_e \in \SO(d)$ fixes a $(d-1)$-dimensional subspace. 
    In $\SO(d)$, the only special orthogonal matrix that fixes a hyperplane without reflection is the identity matrix $\mathbf{I}$. Thus, $\tilde{\mathbf{R}} = \mathbf{I}$, which implies $\mathbf{R}_e = \mathbf{R}_{e'}$, and consequently $\F_e^* = \F_{e'}^*$.
    \item Scaling ($\Psi_{\mathrm{TS}}$): Let $\F_e^* = s_e \mathbf{I}$ and $\F_{e'}^* = s_{e'} \mathbf{I}$. We have $(s_e - s_{e'})\mathbf{x} = \mathbf{0}$. 
    It follows from $\mathbf{x} \neq \mathbf{0}$ that $s_e = s_{e'}$, and hence $\F_e^* = \F_{e'}^*$.
    \item Similarity ($\Psi_{\mathrm{TRS}}$): Let $\F_e^* = s_e \mathbf{R}_e$ and $\F_{e'}^* = s_{e'} \mathbf{R}_{e'}$. 
    For $\mathbf{x} \in \mathcal{P}$, taking the Euclidean norm yields $|s_e| \|\mathbf{x}\| = |s_{e'}| \|\mathbf{x}\|$. 
    Since scale factors are strictly positive ($s > 0$), we obtain $s_e = s_{e'}$. 
    Canceling the scalar degenerates the problem to the pure rotation case, yielding $\mathbf{R}_e = \mathbf{R}_{e'}$. Hence, $\F_e^* = \F_{e'}^*$.
\end{itemize}

Having rigorously established $\F_e^* = \F_{e'}^*$, substituting it back into \eqref{eq:shared_vertex} immediately yields $\mathbf{t}_e = \mathbf{t}_{e'}$. 
Since the dual graph $\hat{\mathcal{G}}$ is connected (Assumption \ref{ass:topo}), this local equality $(\F_e^*, \mathbf{t}_e) = (\F_{e'}^*, \mathbf{t}_{e'})$ propagates transitively across all elements. 
Consequently, there exists a globally uniform transformation $\F^*$ and translation $\mathbf{t}$ such that $\mathbf{p}_i = \F^* \mathbf{q}_i + \mathbf{t}$ for all $i \in \mathcal{V}$. 
Comparing this global representation with the definitions of the target manifolds in \eqref{eq:M} concludes the proof. 
\end{proof}

Lemma \ref{lem:Psi} implies that it suffices to drive $V(\p) = 0$ to complete the formation task. 
The following result establishes the property of convergence under various controllers derived using different energy functions, with the proof being provided in Appendix \ref{app:main_thm}.

\begin{theorem}
    \label{thm:main_thm}
    Under Assumptions \ref{ass:non-degenerate} and \ref{ass:topo}, consider the multi-agent system with dynamics \eqref{eq:dyn} under the control law \eqref{eq:control_law}. 
    \begin{enumerate}
        \item If $\Psi = \Psi_{\mathrm{T}}$, then the system converges to $\M_{\mathrm{T}}$ globally.
        \item If $\Psi = \Psi_{\mathrm{TR}}$, then the system converges to $\M_{\mathrm{TR}}$ locally.
        \item If $\Psi = \Psi_{\mathrm{TS}}$, then the system converges to $\M_{\mathrm{TS}}$ globally.
        \item If $\Psi = \Psi_{\mathrm{TRS}}$, then the system converges to $\M_{\mathrm{TRS}}$ locally.
    \end{enumerate}
\end{theorem}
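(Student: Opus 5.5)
The plan is to treat all four cases as gradient flows $\dot{\p} = -\nabla V(\p)$, since \eqref{eq:control_law} is exactly the negative gradient of \eqref{eq:energy}. Then $\dot V = -\|\nabla V\|^2 \le 0$, so $V$ is a Lyapunov function. By Lemma~\ref{lem:Psi}, every configuration with $V(\p)=0$ lies in the corresponding $\M_{(\cdot)}$. Conversely, every $\p\in\M_{(\cdot)}$ gives $\Psi(\F_e)=0$ for all $e$, so $\M_{(\cdot)} = V^{-1}(0)$. The task therefore reduces to showing that trajectories converge to the zero set of $V$. This holds globally when $V$ has no spurious critical points, and locally otherwise.

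\textbf{Linear cases (T and TS).} For $\Psi_{\mathrm{T}}$, $\F_e$ is affine in $\p$, so $V$ is a convex quadratic in $\p$. For $\Psi_{\mathrm{TS}}$, the map $\F \mapsto \F - \tfrac{\trace(\F)}{d}\mathbf{I}$ is a linear orthogonal projection, so $V = \p^\top \mathbf{L}\p + \mathbf{b}^\top\p + c$ (with $\mathbf{b}=0,c\neq 0$ possible for T). Each term is a sum of squares of affine functions of $\p$, hence $V \ge 0$ and convex.

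A convex function attains its minimum at every critical point, and here the minimum value $0$ is achieved (at $\p=\q$). So every critical point satisfies $V=0$ and lies in $\M_{(\cdot)}$ by Lemma~\ref{lem:Psi}.

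For a convex quadratic, the gradient flow is the linear system $\dot{\p} = -2\mathbf{L}\p - \mathbf{b}$ with $\mathbf{L}\succeq 0$. Its solutions converge exponentially to the affine set of minimizers, namely the projection of the initial condition onto $\M_{(\cdot)}$ along $\ker \mathbf{L}$, which is consistent with $\sum_i \dot{\p}_i = 0$. This gives global convergence for items 1) and 3). Here $\M_{\mathrm{TS}}$ includes $s\le 0$, consistent with $s\in\R$ in \eqref{eq:M}.

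\textbf{Nonlinear cases (TR and TRS).} These involve the polar factor $\mathbf{R}(\F)$, which is smooth only where $\F$ is nonsingular with $\det \F>0$. Here I would argue locally.

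First, $V$ is real analytic (in fact smooth) in a neighborhood $\mathcal{U}$ of $\M_{(\cdot)}$, since near any point of $\M$ every $\F_e$ is close to $s\mathbf{R}$ with $s>0$. The gradient formula $2(\F-\Proj(\F))$ from Appendix~\ref{app:grad} is valid there.

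Second, the set $\M_{(\cdot)}$ is exactly the minimizing set $V^{-1}(0)$ within $\mathcal{U}$. Along a gradient flow $V$ is nonincreasing. Combining this with the Łojasiewicz gradient inequality for analytic functions, $\|\nabla V\| \ge c\,V^{\theta}$ near $\M$, gives two conclusions: trajectories starting sufficiently close to $\M$ have finite length, so they stay in $\mathcal{U}$, and they converge to a point where $V=0$, i.e.\ a point of $\M_{(\cdot)}$.

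Alternatively, I would try to verify that $\M$ is a Morse–Bott minimizing manifold. This means showing that the Hessian of $V$ has kernel exactly equal to the tangent space of $\M$, which has dimension $d+\tfrac{d(d-1)}{2}$, plus $1$ for TRS. That would give local exponential convergence.

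\textbf{Main obstacle.} The hardest step is establishing this nondegeneracy for TR and TRS. Linearizing $\F_e-\Proj(\F_e)$ at $\F_e = s\mathbf{R}$ gives, up to rotation, the symmetric part of $\delta\F$ for TR, and the traceless symmetric part for TRS. The Hessian therefore vanishes exactly on perturbations where every $\delta\F_e$ is $\mathbf{R}$ times skew-symmetric, or skew-symmetric plus a multiple of $\mathbf{I}$. These are infinitesimal rotations, respectively similarities, of each element.

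To show that such perturbations are global, I would reuse the shared-face argument of Lemma~\ref{lem:Psi} at the infinitesimal level. Two skew matrices, or skew plus scalar matrices, that agree on a hyperplane must coincide. Connectivity of the dual graph then propagates a common infinitesimal transformation to all elements. If this step proves delicate, I would fall back on the Łojasiewicz route, which needs only analyticity and the fact that $V$ vanishes on $\M$.

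Global convergence is not claimed for these cases, because $\det\F_e$ may change sign and the polar factor becomes nonsmooth or produces spurious equilibria, for example flipped elements.
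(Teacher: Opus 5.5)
Your proposal is correct. It differs from the paper's proof mainly in how much of the work is delegated to Lemma~\ref{lem:Psi}.

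\textbf{The paper's route.} The paper handles all four cases through the Hessian. It computes the second variation of $V$, which up to constants is $\sum_e w_e$ times $\|\delta\F_e\|_F^2$, $\|\mathrm{sym}(\delta\F_e)\|_F^2$, $\|\mathrm{dev}(\delta\F_e)\|_F^2$, or $\|\mathrm{dev}(\mathrm{sym}(\delta\F_e))\|_F^2$. It reads off the per-element kernel as infinitesimal translations, rigid motions, scalings, or similarities. It then repeats the shared-face propagation of Lemma~\ref{lem:Psi} at the infinitesimal level, so the kernel equals the tangent space of $\M_{(\cdot)}$. Convergence is attributed to strict convexity modulo that kernel for T and TS. For TR and TRS the paper stops at the kernel computation and leaves convergence implicit.

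\textbf{Cases T and TS.} Here you bypass the Hessian entirely. A positive semidefinite quadratic whose minimum value $0$ is attained has the affine set $V^{-1}(0)$ as its minimizer set. Lemma~\ref{lem:Psi} already identifies this set with $\M_{(\cdot)}$, so the kernel is forced to be the right one without redoing any propagation. The linear ODE then gives exponential convergence and an explicit limit. You also correctly flag that $\M_{\mathrm{TS}}$ contains $s\le 0$, including collapse.

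\textbf{Cases TR and TRS.} Your Morse--Bott alternative is essentially the paper's computation. Your remark that skew (or skew-plus-scalar) matrices vanishing on a hyperplane must vanish is the infinitesimal counterpart of the paper's $\SO(d)$ argument. You additionally state the step the paper omits: kernel $=$ tangent space yields local exponential convergence to a point of $\M_{(\cdot)}$.

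Your primary \L{}ojasiewicz route needs strictly less. It uses only analyticity of $V$ on $\{\det\F_e>0\}$ and the inclusion $V^{-1}(0)\cap\mathcal{U}\subseteq\M_{(\cdot)}$ from Lemma~\ref{lem:Psi}, with no nondegeneracy. The price is that it gives no rate. Note that its constants are local, so ``sufficiently close to $\M$'' should be read near each point of $\M$, with neighborhoods shrinking as $s\to 0$ for TRS.

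\textbf{Two cosmetic slips.}
For T, the quadratic has $\mathbf{b}\neq\mathbf{0}$ and $c=d\sum_e w_e\neq 0$; it is for TS that both vanish. The limit of the linear flow is the orthogonal projection of $\p(0)$ onto $\M_{(\cdot)}$, since the $\ker\mathbf{L}$-component is conserved. It is not a projection along $\ker\mathbf{L}$.
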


\begin{remark}
    In the above, we employ single-integrator dynamics \eqref{eq:dyn} to transparently illustrate the core mechanism of the element-based deformation framework, where the control law behaves as a pure gradient descent. 
    However, because the total distortion energy $V(\p)$ acts as a well-defined artificial potential field, this framework can be extended to agents with more complex dynamics. 
    For instance, extending to double-integrator models requires the addition of a velocity consensus or damping term to dissipate kinetic energy. 
    Furthermore, for general linear or nonlinear systems, the proposed deformation gradient $\nabla_{\p} V(\p)$ can be integrated as the spatial coordination term within a hierarchical or passivity-based control architecture. 
    We leave the rigorous stability analysis for these general dynamics to future work.
\end{remark}

\subsection{Features of Implementation}
In this section, we present some important properties of the proposed deformation-based formation control framework, including centroid invariance and coordinate-free implementation, which resemble the well-known properties of rigidity-based formation control. 
The results are formally stated as follows.

\begin{lemma}[Centroid Invariance] \label{lem:centroid}
    Consider the multi-agent system with dynamics \eqref{eq:dyn} under the control law \eqref{eq:control_law}, where the energy $V(\p) = \sum_e w_e \Psi(\F_e)$ depends solely on the local deformation gradients. 
    The centroid of the formation, defined by $\bar{\p} = \frac{1}{N} \sum_{i=1}^N \p_i$, is invariant under the system evolution, i.e., $\dot{\bar{\p}}(t) = \mathbf{0}$ for all $t \ge 0$.
\end{lemma}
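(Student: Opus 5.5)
The plan is to show that the sum of all control inputs vanishes, $\sum_{i=1}^N \uu_i = \mathbf{0}$. Then $\dot{\bar{\p}} = \frac{1}{N}\sum_i \dot{\p}_i = \frac{1}{N}\sum_i \uu_i = \mathbf{0}$ follows directly from \eqref{eq:dyn}.

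The cleanest route is translation invariance of $V$. For any $\mathbf{t}\in\R^d$, shifting every agent by $\mathbf{t}$ leaves each current shape matrix $\mathbf{S}_e$ unchanged, because it only contains differences $\p_{e_j}-\p_{e_0}$. Hence each $\F_e = \mathbf{S}_e\mathbf{S}_{e,\mathrm{ref}}^{-1}$ is unchanged, and so $V(\p + \vecbf{1}_N\otimes \mathbf{t}) = V(\p)$ for all $\mathbf{t}$. Differentiating with respect to $\mathbf{t}$ at $\mathbf{t}=\mathbf{0}$ gives
\[
\sum_{i=1}^N \nabla_{\p_i} V(\p) = \mathbf{0}.
\]
By \eqref{eq:control_law}, this says exactly $\sum_i \uu_i = \mathbf{0}$.

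As a sanity check, and to avoid relying on global differentiability arguments, I would also verify the identity element by element. For a fixed $e$, only its $d+1$ vertices receive a contribution. Writing $\mathbf{P}_e^{\Psi} = \partial\Psi/\partial\F_e$, the forces on $e_1,\dots,e_d$ are, in vector form, the columns of $-w_e\,\mathbf{P}_e^{\Psi}\mathbf{S}_{e,\mathrm{ref}}^{-\top}$. The force on $e_0$ is $w_e\,\mathbf{P}_e^{\Psi}\mathbf{S}_{e,\mathrm{ref}}^{-\top}\vecbf{1}_d$, i.e.\ minus the sum of those columns. This reflects $\partial \F_e/\partial \p_{e_0} = -\sum_{j}\partial\F_e/\partial\p_{e_j}$. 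So each element's contributions cancel, and summing over $\mathcal{E}$ gives the claim.

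The only step needing care is differentiability of $\Psi$ for the rotation-based energies $\Psi_{\mathrm{TR}}$ and $\Psi_{\mathrm{TRS}}$, where $\mathbf{R}=\mathbf{U}\mathbf{V}^\top$ can be non-unique when $\F_e$ is singular. I would handle this by noting that the argument only uses the chain-rule structure $\partial\F_e/\partial\p_i$, which is independent of $\Psi$. Whatever expression is used for $\partial\Psi/\partial\F_e$ (here $2(\F_e-\Proj(\F_e))$, by Appendix~\ref{app:grad}), the per-element cancellation holds identically. Centroid invariance therefore holds wherever the control law \eqref{eq:control_law} is defined.
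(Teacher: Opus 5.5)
Your proposal is correct and follows the paper's proof essentially verbatim. Each $\F_e$ depends only on relative positions, so $V(\p+\vecbf{1}_N\otimes\mathbf{t})=V(\p)$; differentiating in $\mathbf{t}$ at $\mathbf{t}=\mathbf{0}$ gives $\sum_i\nabla_{\p_i}V=\mathbf{0}$, hence $\sum_i\uu_i=\mathbf{0}$ and $\dot{\bar{\p}}=\mathbf{0}$. Your extra element-wise check is also correct and slightly strengthens the argument. The force on $e_0$ is $w_e\,(\partial\Psi/\partial\F_e)\,\mathbf{S}_{e,\mathrm{ref}}^{-\top}\vecbf{1}_d$, which cancels the columns assigned to $e_1,\dots,e_d$. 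This shows the cancellation holds for whatever matrix is used for $\partial\Psi/\partial\F_e$, so it does not rely on differentiability of $V$ where $\Proj_{\SO(d)}$ is non-unique.
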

\begin{proof}
        The deformation gradient $\F_k$ depends exclusively on relative position vectors $\p_j - \p_i$. 
    Consequently, the total potential energy $V(\p)$ is invariant under global translation: $V(\p + \mathbf{1}_N \otimes \mathbf{t}) = V(\p)$ for any $\mathbf{t} \in \mathbb{R}^d$.
    Differentiating with respect to $\mathbf{t}$, we obtain:
    \begin{equation*}
        \nabla_{\mathbf{t}} V(\p + \mathbf{1} \otimes \mathbf{t}) \Big|_{\mathbf{t}=\mathbf{0}} = \sum_{i=1}^N \nabla_{\p_i} V(\p) = \mathbf{0}.
    \end{equation*}
    Since the control input is $\mathbf{u}_i = -\nabla_{\p_i} V(\p)$, the sum of control inputs is zero: $\sum_{i=1}^N \mathbf{u}_i = \mathbf{0}$.
    The dynamics of the centroid are given by:
    \begin{equation*}
        \dot{\bar{\p}} = \frac{1}{N} \sum_{i=1}^N \dot{\p}_i = \frac{1}{N} \sum_{i=1}^N \mathbf{u}_i = \mathbf{0}.
    \end{equation*}
    Thus, the centroid remains stationary throughout the process.
\end{proof}

This property confirms that the proposed deformation-based forces are pure internal stresses that reshape the formation without inducing net global motion.
In addition to the centroid invariance, the proposed framework also features a coordinate-free implementation for some of our designs.

\begin{lemma}[Coordinate-Free Implementation] \label{lem:coordinate}
    The control law \eqref{eq:control_law} with energies being defined by \eqref{eq:energy_tr} and \eqref{eq:energy_trs} can be implemented using only local relative measurements expressed in each agent's local body frame, without knowledge of a common global frame or compass.
\end{lemma}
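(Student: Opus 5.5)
The plan is to show that the control law \eqref{eq:control_law} with $\Psi_{\mathrm{TR}}$ or $\Psi_{\mathrm{TRS}}$ is $\SO(d)$-equivariant. Concretely, if every relative measurement is rotated by some $\mathbf{Q}\in\SO(d)$, the computed input is rotated by the same $\mathbf{Q}$. Once this holds, agent $i$ can evaluate $\uu_i$ entirely in its own body frame, which differs from the global frame by an unknown rotation $\mathbf{Q}_i$. It then applies the result in that frame, and the physically realized velocity equals the global-frame $\uu_i$. The translation part is already handled, since $\F_e$ depends only on relative positions $\p_{e_j}-\p_{e_0}$ (cf.\ the proof of Lemma~\ref{lem:centroid}). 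Agent $i$ therefore needs only the vectors $\p_j-\p_i$ to its element neighbors, and in $\mathbf{S}_e$ we may take $e_0=i$ for each element containing $i$. Changing the base vertex changes $\F_e$ only by a right multiplication that cancels with the corresponding $\mathbf{S}_{e,\mathrm{ref}}^{-1}$, so $\F_e$ is independent of the choice of origin.

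The first step is to express each element's contribution to $\uu_i$. By the chain rule and the formulas for $\partial\F_e/\partial\p_{e_j}$, the contribution of element $e$ is $-2w_e(\F_e-\Proj(\F_e))\,\mathbf{g}_{e,i}$. Here $\mathbf{g}_{e,i}$ is a vector built only from $\mathbf{S}_{e,\mathrm{ref}}^{-1}$: its rows give $\mathbf{g}_{e,e_j}$ for $j\ge1$, and $\mathbf{g}_{e,e_0}$ is minus the sum of those rows. This vector is a fixed quantity known offline from the reference shape and is frame-independent. Under the measurement change $\p_j-\p_i\mapsto \mathbf{Q}(\p_j-\p_i)$ we get $\mathbf{S}_e\mapsto\mathbf{Q}\mathbf{S}_e$, hence $\F_e\mapsto\mathbf{Q}\F_e$. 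It therefore suffices to prove $\Proj(\mathbf{Q}\F)=\mathbf{Q}\Proj(\F)$ for the two projections.

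The key step is this equivariance, and I would prove it through the SVD. If $\F=\mathbf{U}\bm{\Sigma}\mathbf{V}^\top$, then $\mathbf{Q}\F=(\mathbf{Q}\mathbf{U})\bm{\Sigma}\mathbf{V}^\top$ with $\mathbf{Q}\mathbf{U}\in\SO(d)$, so the rotation factor becomes $\mathbf{Q}\mathbf{U}\mathbf{V}^\top=\mathbf{Q}\mathbf{R}$. The same conclusion follows from the variational characterization in Appendix~\ref{app:proj}: $\mathbf{R}$ minimizes $\|\F-\mathbf{R}\|_F$ over $\SO(d)$, and left multiplication by $\mathbf{Q}$ is a Frobenius isometry that permutes $\SO(d)$. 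For $\Psi_{\mathrm{TRS}}$, the scale $s=\trace((\mathbf{Q}\mathbf{R})^\top\mathbf{Q}\F)/d=\trace(\mathbf{R}^\top\F)/d$ is invariant. Hence $\Proj(\mathbf{Q}\F)=s\mathbf{Q}\mathbf{R}=\mathbf{Q}\Proj(\F)$, and each element contribution transforms as $\mathbf{Q}$ times the original, so $\uu_i$ does as well.

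I expect the main obstacle to be the non-uniqueness of the SVD, or of the polar rotation, when $\F$ is singular or has repeated singular values. There the rotation $\mathbf{U}\mathbf{V}^\top$ may be ill-defined, so I will argue via the minimizer set rather than a particular factorization: $\mathbf{Q}$ maps the set of minimizers for $\F$ bijectively onto the set for $\mVert{}$ $\mathbf{Q}\F$, so any consistent selection rule remains equivariant. Away from degenerate configurations, which is the setting of the local result in Theorem~\ref{thm:main_thm}, the minimizer is unique and the issue disappears. As a final remark, the argument fails for \eqref{eq:energy_t} and \eqref{eq:energy_ts}, since $\mathbf{I}$ and $s\mathbf{I}$ are not mapped to $\mathbf{Q}\mathbf{I}$. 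This explains why the lemma is restricted to $\Psi_{\mathrm{TR}}$ and $\Psi_{\mathrm{TRS}}$.
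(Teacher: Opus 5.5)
Your proposal is correct and follows essentially the same route as the paper. Both arguments show three things: body-frame measurements give $\F_e^{(i)}=\mathbf{Q}_i\F_e$; the projection onto $\SO(d)$ (resp.\ onto scaled rotations) satisfies $\Proj(\mathbf{Q}_i\F)=\mathbf{Q}_i\Proj(\F)$ with the scale factor unchanged; and therefore the stress $2(\F_e-\Proj(\F_e))$, and the force obtained from it via a frame-independent vector built from $\mathbf{S}_{e,\mathrm{ref}}^{-1}$, are exactly the global quantities expressed in agent $i$'s frame.

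Your additions are refinements of this argument rather than a different route:
\begin{itemize}
\item You show that $\F_e$ does not depend on the choice of base vertex, which justifies each agent taking itself as root; the paper simply assumes this.
\item You give the per-vertex vector $\mathbf{g}_{e,i}$ explicitly and correctly.
\item You explain why $\Psi_{\mathrm{T}}$ and $\Psi_{\mathrm{TS}}$ are excluded.
\end{itemize}
One small caveat concerns your degenerate-case remark. Where the projection is non-unique, $\Psi$ is not differentiable, so that case lies outside the control law altogether. It is not rescued by ``any consistent selection rule'', since such a rule is equivariant only if it is chosen to be.
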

\begin{proof}
        Let $\mathbf{Q}_i \in \SO(d)$ denote the rotation matrix from the global frame $\Sigma_g$ to the local body frame $\Sigma_i$ of agent $i$. 
    For any neighbor $j$, the relative position measured locally by agent $i$ is $\p_{ij}^{(i)} = \mathbf{Q}_i (\p_j - \p_i)$.
    
    Consider an element $e$ where agent $i$ acts as the local root. 
    The current shape matrix constructed in the local frame is:
    \begin{equation*}
        \mathbf{S}_{e}^{(i)} = [\p_{ij_1}^{(i)}, \dots, \p_{ij_d}^{(i)}] = \mathbf{Q}_i \mathbf{S}_{e},
    \end{equation*}
    where $\mathbf{S}_{e}$ is the shape matrix in the global frame.
    The local deformation gradient is computed as:
    \begin{equation*}
        \F_e^{(i)} = \mathbf{S}_{e}^{(i)} \mathbf{S}_{e,\mathrm{ref}}^{-1} = \mathbf{Q}_i \mathbf{S}_{e} \mathbf{S}_{e,\mathrm{ref}}^{-1} = \mathbf{Q}_i \F_e.
    \end{equation*}
    
    We now verify the isotropicity of the stress tensor $\mathbf{P} = \frac{\partial \Psi}{\partial \F}$. 
    A scalar energy function $\Psi(\F)$ is isotropic if $\Psi(\mathbf{R}\F) = \Psi(\F)$ for any rotation $\mathbf{R}$.
    \begin{itemize}
        \item For $\Psi_{\mathrm{TR}}(\F) = \min_{\mathbf{R}} \|\F-\mathbf{R}\|_F^2$, let $\mathbf{R}^*$ be the optimal rotation for $\F$. Then $\mathbf{Q}_i \mathbf{R}^*$ is the optimal rotation for $\mathbf{Q}_i \F$, yielding the same energy value.
        \item Similarly for $\Psi_{\mathrm{TRS}}$, the scaling factor $s$ is rotation-invariant, and the rotation part aligns automatically.
    \end{itemize}
    
    For an isotropic energy function, the first Piola-Kirchhoff stress tensor transforms covariantly:
    \begin{equation*}
        \mathbf{P}(\F^{(i)}) = \mathbf{P}(\mathbf{Q}_i \F) = \mathbf{Q}_i \mathbf{P}(\F).
    \end{equation*}
    Specifically:
    \begin{itemize}
        \item $\Psi_{\mathrm{TR}}$: $\mathbf{P}^{(i)} = 2(\F^{(i)} - \mathbf{R}^{(i)}) = 2(\mathbf{Q}_i \F - \mathbf{Q}_i \mathbf{R}) = \mathbf{Q}_i \mathbf{P}$.
        \item $\Psi_{\mathrm{TRS}}$: $\mathbf{P}^{(i)} = 2(\F^{(i)} - s\mathbf{R}^{(i)}) = \mathbf{Q}_i \mathbf{P}$.     
    \end{itemize}
    
    The control force acting on agent $i$ derived from element $e$ is given by $\mathbf{f}_i = - \mathbf{P} \mathbf{S}_{e,\mathrm{ref}}^{-\top} \mathbf{1}$ (summing contributions from edges). In the local frame:
    \begin{equation*}
        \mathbf{f}_i^{(i)} = - \mathbf{P}^{(i)} \mathbf{S}_{e,\mathrm{ref}}^{-\top} \mathbf{1} = - (\mathbf{Q}_i \mathbf{P}) \mathbf{S}_{e,\mathrm{ref}}^{-\top} \mathbf{1} = \mathbf{Q}_i \mathbf{f}_i.
    \end{equation*}
    This implies that the force vector calculated locally is exactly the global force vector represented in the agent's local frame. Since the agent's actuators (e.g., thrusters, wheels) operate in the local frame, $\mathbf{u}_i = \mathbf{f}_i^{(i)}$ can be directly applied without estimating $\mathbf{Q}_i$. Thus, the implementation is coordinate-free.
\end{proof}

\subsection{Connections with Existing Methods} \label{subsec:con}
In this section, we shall show some connections of our proposed element-based controllers with existing methods.
First, we present the connections in terms of graphical conditions.
\begin{lemma} \label{lem:inf_rigid}
    Let Assumptions \ref{ass:non-degenerate} and \ref{ass:topo} hold. 
    The graph $\mathcal{G}$ is connected. 
    The framework is infinitesimally distance/bearing/angle/RoD rigid. 
\end{lemma}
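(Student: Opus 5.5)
Throughout, $\mathcal{G}$ is the graph obtained from the simplicial complex: vertices are agents, edges are the pairs of agents lying in a common element. The framework is $(\mathcal{G},\q)$, since Assumption \ref{ass:non-degenerate} is stated for the reference configuration (the same argument applies at any generic $\p$ with nondegenerate elements).

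\emph{Connectivity.} Every element is a complete graph on $d+1$ vertices, so all agents of one element are mutually connected. If two elements are adjacent in $\hat{\mathcal{G}}$, they share $d\ge 1$ vertices, so their vertex sets lie in one connected component of $\mathcal{G}$. By Assumption \ref{ass:topo}, $\hat{\mathcal{G}}$ is connected, so the union of all elements is a single component. Every agent belongs to some element, so $\mathcal{G}$ is connected.

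\emph{Infinitesimal rigidity.} I would use one uniform template for all four notions. Take an infinitesimal motion $\delta\p$ in the kernel of the relevant rigidity matrix: distance, bearing, angle or RoD. It preserves, to first order, the corresponding measurements on every edge, or every angle/ratio, inside each element. The goal is to show that $\delta\p$ is a trivial motion of the matching group: Euclidean (translations and rotations) for distance, translations plus scaling for bearing, and similarity for angle and RoD.

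\emph{Step 1: single element.} A $d$-simplex with affinely independent vertices is a complete graph, so it is infinitesimally rigid in each of the four senses. I would cite this from \cite{zhao2019bearing,cao2019ratio,jing2019angle,chen2025angle}, or argue it directly via the deformation gradient. Write $\delta\p_i=\mathbf{A}_e\q_i+\mathbf{b}_e$ for $i\in\mathcal{V}_e$, where $\mathbf{A}_e=\delta\mathbf{S}_e\mathbf{S}_{e,\mathrm{ref}}^{-1}$; this is possible because an affine map is determined on $d+1$ affinely independent points. The constraints then restrict $\mathbf{A}_e$:
\begin{itemize}
\item[] Distance constraints make the symmetric part of $\mathbf{S}_{e,\mathrm{ref}}^{\top}\mathbf{A}_e\mathbf{S}_{e,\mathrm{ref}}$ vanish on all edge vectors. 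Polarization over the $d(d+1)/2$ edges of the simplex then gives $\mathbf{A}_e$ skew-symmetric.
\item[] Bearing constraints force $\mathbf{A}_e(\q_j-\q_i)\parallel(\q_j-\q_i)$ for every edge, which gives $\mathbf{A}_e=c_e\mathbf{I}$.
\item[] Angle or RoD constraints give $\mathbf{A}_e=c_e\mathbf{I}+\mathbf{K}_e$ with $\mathbf{K}_e$ skew-symmetric.
\end{itemize}
These are exactly the linearizations of the classes in Lemma \ref{lem:Psi}.

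\emph{Step 2: gluing, as in Lemma \ref{lem:Psi}.} For adjacent elements $e,e'$, the affine maps agree on the $d$ shared vertices. Hence $(\mathbf{A}_e-\mathbf{A}_{e'})\mathbf{x}=\mathbf{0}$ for all $\mathbf{x}$ in the hyperplane $\mathcal{P}$ spanned by the shared face.

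\emph{Step 3: conclude.} By connectivity of $\hat{\mathcal{G}}$, a single $(\mathbf{A},\mathbf{b})$ describes $\delta\p$ globally, so $\delta\p$ is trivial. The kernel of the rigidity matrix therefore has the minimal dimension, which is exactly infinitesimal rigidity.

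The main obstacle is Step 2, because Lemma \ref{lem:Psi} used group arguments, whereas here we have Lie-algebra elements. The clean case is when $\mathbf{D}=\mathbf{A}_e-\mathbf{A}_{e'}$ is a multiple of $\mathbf{I}$: it kills a nonzero vector, so it is zero. In the skew case, a skew $\mathbf{D}$ vanishing on $\mathcal{P}=\mathbf{n}^{\perp}$ gives $\mathbf{x}^{\top}\mathbf{D}\mathbf{n}=-\mathbf{n}^{\top}\mathbf{D}\mathbf{x}=0$ for all $\mathbf{x}\in\mathcal{P}$. Also $\mathbf{n}^{\top}\mathbf{D}\mathbf{n}=0$, so $\mathbf{D}\mathbf{n}=\mathbf{0}$ and $\mathbf{D}=\mathbf{0}$. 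The mixed case $c\mathbf{I}+\mathbf{K}$ is handled by the same trick: $\mathbf{x}^{\top}\mathbf{D}\mathbf{x}=c\|\mathbf{x}\|^2=0$ for $\mathbf{x}\in\mathcal{P}$, which forces $c=0$ and reduces to the skew case. Step 1 for angle and RoD depends on the exact constraint set chosen in the cited works, and I would defer to their single-simplex results there.
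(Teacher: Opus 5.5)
Your proof is correct. The connectivity part matches the paper's, but for infinitesimal rigidity you take a genuinely different route.

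The paper simply invokes the first type of Henneberg construction from the cited works. Implicitly, it adds elements along a traversal of $\hat{\mathcal{G}}$: each new element brings at most one new vertex, attached to the $d$ mutually adjacent vertices of a shared face, and elements that close cycles only add constraints. It therefore relies on vertex-addition preservation theorems for each of the four rigidity notions.

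You instead compute the kernel of the rigidity matrix directly. The motion on each element is an affine map $\delta\p_i=\mathbf{A}_e\q_i+\mathbf{b}_e$. The intra-element constraints confine $\mathbf{A}_e$ to skew matrices, multiples of $\mathbf{I}$, or their sums. Your face-gluing fact (such a matrix vanishing on a hyperplane is zero) is precisely the infinitesimal analogue of the group argument in Lemma~\ref{lem:Psi}.

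Your route has several advantages:
\begin{itemize}
\item It is self-contained and uniform in $d$ and in the rigidity notion.
\item It handles cycles of $\hat{\mathcal{G}}$ without any ordering.
\item It needs external input only for the single-simplex angle/RoD step, and your own polarization trick also closes that step. The linearized constraints depend only on $\mathrm{sym}(\mathbf{A}_e)$. For RoD ratios linking all edges of the simplex, they force $\mathbf{u}^\top\mathrm{sym}(\mathbf{A}_e)\mathbf{u}/\|\mathbf{u}\|^2$ to equal a common constant $c_e$ over all edge vectors $\mathbf{u}$, whence $\mathrm{sym}(\mathbf{A}_e)=c_e\mathbf{I}$.
\end{itemize}

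The paper's citation buys brevity, but it inherits whatever restrictions (dimension, admissible angle sets) the cited Henneberg results carry.
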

\begin{proof}
    The first statement follows from the intra-element (by definition) and the inter-element connectivity (by Assumption \ref{ass:topo}).
    The second statement follows from the first type of Henneberg construction~\cite{zhao2019bearing,jing2019angle, cao2019ratio}.
\end{proof} 

This lemma suggests that Assumption \ref{ass:topo} is a slightly stronger condition since it aims to unify different methods together.
This stems from the fact that infinitesimal rigidity with angle/RoD constraints requires more stringent graphical conditions than distance/bearing-based rigidity. Specifically, while a systematic construction involving adding vertices to neighbors has been established for the former \cite{cao2019ratio,jing2019angle}, it remains unclear whether other operations, i.e., adding vertices to non-neighbors or the second type of Henneberg construction, preserve rigidity in this context.

Next, we show that the deformation gradient $\F_e$ serves as a unified descriptor, and limiting $\F_e$ to specific manifolds naturally recovers the constraints of existing rigidity-based methods.  

\subsubsection{Connection to Displacement-based Control}
A distributed displacement-based controller aims to drive the relative position vector $\p_j - \p_i$ to a desired vector $\q_j - \q_i$, which corresponds to the translation-invariant mode ($\Psi_{\mathrm{T}}$) in our framework, where $\F_e$ is constrained to the identity matrix $\mathbf{I}$.

\begin{proposition}\label{prop:disp}
    Minimizing the translation-invariant energy $\Psi_{\mathrm{T}}(\F_e)$ is mathematically equivalent to minimizing the weighted sum of squared displacement errors for all edges in the element. Specifically:
    \begin{equation}
        \Psi_{\mathrm{T}}(\F_e) = \sum_{j=1}^d \sum_{k=1}^d (\gbf{\Gamma}_e)_{jk} (\mathbf{\delta}_j^\top \mathbf{\delta}_k),
    \end{equation}
    where $\mathbf{\delta}_k = (\p_{e_k} - \p_{e_0}) - (\q_{e_k} - \q_{e_0})$ is the displacement error of the $k$-th edge relative to the root $e_0$, and $\gbf{\Gamma}_e = (\mathbf{S}_{e,\mathrm{ref}}^\top \mathbf{S}_{e,\mathrm{ref}})^{-1}$ is the shape stiffness matrix determined by the reference configuration.
\end{proposition}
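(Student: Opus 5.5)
The plan is a direct algebraic computation. I first write $\F_e - \mathbf{I}$ in terms of the edge displacement errors, and then expand the Frobenius norm using the trace.

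\textbf{Step 1: factor out the displacement error.} Let $\mathbf{D}_e = [\mathbf{\delta}_1, \dots, \mathbf{\delta}_d] = \mathbf{S}_e - \mathbf{S}_{e,\mathrm{ref}} \in \R^{d\times d}$, whose columns are the edge displacement errors relative to the root $e_0$. By the definition \eqref{eq:def_F} and Assumption~\ref{ass:non-degenerate},
\begin{equation*}
\F_e - \mathbf{I} = \mathbf{S}_e \mathbf{S}_{e,\mathrm{ref}}^{-1} - \mathbf{S}_{e,\mathrm{ref}}\mathbf{S}_{e,\mathrm{ref}}^{-1} = \mathbf{D}_e \mathbf{S}_{e,\mathrm{ref}}^{-1}.
\end{equation*}

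\textbf{Step 2: expand the norm via the trace.} Write $\mathbf{A} = \mathbf{S}_{e,\mathrm{ref}}^{-1}$. Then
\begin{equation*}
\Psi_{\mathrm{T}}(\F_e) = \trace\big(\mathbf{A}^\top \mathbf{D}_e^\top \mathbf{D}_e \mathbf{A}\big) = \trace\big(\mathbf{D}_e^\top \mathbf{D}_e\, \mathbf{A}\mathbf{A}^\top\big),
\end{equation*}
where the second equality uses cyclicity of the trace. Next I simplify $\mathbf{A}\mathbf{A}^\top$:
\begin{equation*}
\mathbf{A}\mathbf{A}^\top = \mathbf{S}_{e,\mathrm{ref}}^{-1}\mathbf{S}_{e,\mathrm{ref}}^{-\top} = (\mathbf{S}_{e,\mathrm{ref}}^\top \mathbf{S}_{e,\mathrm{ref}})^{-1} = \gbf{\Gamma}_e.
\end{equation*}
This is the only point that needs care. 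The order of the factors must yield the Gram inverse $(\mathbf{S}_{e,\mathrm{ref}}^\top\mathbf{S}_{e,\mathrm{ref}})^{-1}$, not $(\mathbf{S}_{e,\mathrm{ref}}\mathbf{S}_{e,\mathrm{ref}}^\top)^{-1}$. Applying cyclicity in the order shown above places $\mathbf{A}\mathbf{A}^\top$ next to $\mathbf{D}_e^\top\mathbf{D}_e$, which gives the correct form.

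\textbf{Step 3: write the trace entrywise.} For any two matrices, $\trace(\mathbf{M}\mathbf{N}) = \sum_{j,k} \mathbf{M}_{kj}\mathbf{N}_{jk}$. Here $(\mathbf{D}_e^\top\mathbf{D}_e)_{kj} = \mathbf{\delta}_k^\top\mathbf{\delta}_j$. Both $\mathbf{D}_e^\top\mathbf{D}_e$ and $\gbf{\Gamma}_e$ are symmetric, so the indices can be swapped freely, and we obtain
\begin{equation*}
\Psi_{\mathrm{T}}(\F_e) = \sum_{j=1}^d\sum_{k=1}^d (\gbf{\Gamma}_e)_{jk}\,\mathbf{\delta}_j^\top\mathbf{\delta}_k,
\end{equation*}
which is the claimed identity.

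\textbf{Equivalence of the minimizations.} Since $\mathbf{S}_{e,\mathrm{ref}}$ is invertible, $\gbf{\Gamma}_e$ is symmetric positive definite. The right-hand side is therefore a positive definite weighted quadratic form in the edge errors. It vanishes exactly when all $\mathbf{\delta}_k = \mathbf{0}$, so minimizing $\Psi_{\mathrm{T}}$ is equivalent to minimizing the weighted sum of squared displacement errors.
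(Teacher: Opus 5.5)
Your proof is correct and follows the paper's route: factor $\F_e-\mathbf{I}=(\mathbf{S}_e-\mathbf{S}_{e,\mathrm{ref}})\mathbf{S}_{e,\mathrm{ref}}^{-1}$ and expand the Frobenius norm as a trace. Your care over the factor ordering is warranted. The paper's intermediate step writes $(\mathbf{S}_{e,\mathrm{ref}}\mathbf{S}_{e,\mathrm{ref}}^\top)^{-1}$, whereas the correct factor is $\mathbf{S}_{e,\mathrm{ref}}^{-1}\mathbf{S}_{e,\mathrm{ref}}^{-\top}=(\mathbf{S}_{e,\mathrm{ref}}^\top\mathbf{S}_{e,\mathrm{ref}})^{-1}$, and the paper never writes out the entrywise expansion you give in Step 3.
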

\begin{proof}
    By definition, the energy is given by:
    \begin{equation*}
        \|\mathbf{S}_e \mathbf{S}_{e,\mathrm{ref}}^{-1} - \mathbf{I}\|_F^2 = \|(\mathbf{S}_e - \mathbf{S}_{e,\mathrm{ref}}) \mathbf{S}_{e,\mathrm{ref}}^{-1}\|_F^2.
    \end{equation*}
    The term $\mathbf{\Delta} = \mathbf{S}_e - \mathbf{S}_{e,\mathrm{ref}}$ collects the displacement errors $\mathbf{\delta}_k$ as its columns. 
    Using the trace property $\|\mathbf{A}\mathbf{B}\|_F^2 = \trace(\mathbf{B}^\top \mathbf{A}^\top \mathbf{A} \mathbf{B})$:
    \begin{equation*}
        \Psi_{\mathrm{T}} = \trace( \mathbf{S}_{e,\mathrm{ref}}^{-\top} \mathbf{\Delta}^\top \mathbf{\Delta} \mathbf{S}_{e,\mathrm{ref}}^{-1} ) = \trace( \mathbf{\Delta} (\mathbf{S}_{e,\mathrm{ref}} \mathbf{S}_{e,\mathrm{ref}}^\top)^{-1} \mathbf{\Delta}^\top ).
    \end{equation*}
    Note that $(\mathbf{S}_{e,\mathrm{ref}}^\top \mathbf{S}_{e,\mathrm{ref}})^{-1}$ acts as a metric tensor derived from the reference geometry (akin to the stiffness/stress matrix).
\end{proof}

\subsubsection{Connection to Distance-based Control}
Distance-based control maintains inter-agent distances $\|\p_j - \p_i\| = \|\q_j - \q_i\|$, allowing for global rotation. 
This corresponds to the rotation-invariant mode ($\Psi_{\mathrm{TR}}$), where $\F_e \in \SO(d)$.

\begin{proposition}\label{prop:dist}
    Minimizing the rotation-invariant energy $\Psi_{\mathrm{TR}}(\F_e)$ enforces the convergence of the Green-Lagrange strain tensor $\matbf{G}_e = \frac{1}{2}(\F_e^\top \F_e - \mathbf{I})$ to zero. The squared distance error for any edge $\q_{ij}$ in the element is a projection of this tensor:
    \begin{equation}
        \|\F_e \q_{ij}\|^2 - \|\q_{ij}\|^2 = 2 \q_{ij}^\top \matbf{G}_e \q_{ij}.
    \end{equation}
\end{proposition}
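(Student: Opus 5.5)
The plan has two parts. The first is the algebraic identity, which is immediate. The second is to show that driving $\Psi_{\mathrm{TR}}(\F_e)$ to zero forces $\matbf{G}_e \to \mathbf{0}$.

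\emph{The identity.} For any edge $(i,j)$ within element $e$, we have $\p_j - \p_i = \F_e(\q_j - \q_i)$. This holds because $\p_k = \F_e \q_k + \mathbf{t}_e$ for every $k \in \mathcal{V}_e$, which follows from the definition \eqref{eq:def_F} together with $\mathbf{t}_e = \p_{e_0} - \F_e \q_{e_0}$. Hence
\begin{equation*}
\|\F_e \q_{ij}\|^2 - \|\q_{ij}\|^2 = \q_{ij}^\top(\F_e^\top \F_e - \mathbf{I})\q_{ij} = 2\q_{ij}^\top \matbf{G}_e \q_{ij}.
\end{equation*}
So each squared-distance error of the element is the quadratic form of $\matbf{G}_e$ evaluated along the corresponding reference edge, which is exactly the claimed projection.

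\emph{Vanishing energy gives vanishing strain.} Write the SVD $\F_e = \mathbf{U}\bm{\Sigma}\mathbf{V}^\top$ with $\mathbf{R} = \mathbf{U}\mathbf{V}^\top$. By unitary invariance of the Frobenius norm,
\begin{equation*}
\Psi_{\mathrm{TR}}(\F_e) = \|\bm{\Sigma} - \mathbf{I}\|_F^2 = \sum_k (\sigma_k - 1)^2,
\end{equation*}
while
\begin{equation*}
\matbf{G}_e = \tfrac12 \mathbf{V}(\bm{\Sigma}^2 - \mathbf{I})\mathbf{V}^\top, \qquad \|\matbf{G}_e\|_F = \tfrac12\Big(\sum_k (\sigma_k - 1)^2(\sigma_k + 1)^2\Big)^{1/2}.
\end{equation*}
Both quantities depend only on the singular values, and each vanishes exactly when all $\sigma_k = 1$. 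Along trajectories of the gradient flow, $V(\p)$ is nonincreasing. By Theorem~\ref{thm:main_thm} (item 2), locally $\p \to \M_{\mathrm{TR}}$, so $V \to 0$ and hence $\Psi_{\mathrm{TR}}(\F_e) \to 0$ for every $e$. Boundedness of $V$ keeps each $\sigma_k$ bounded, say $\sigma_k \le c$, and then $\|\matbf{G}_e\|_F \le \tfrac{1+c}{2}\sqrt{\Psi_{\mathrm{TR}}(\F_e)} \to 0$. Consequently every squared distance error $2\q_{ij}^\top \matbf{G}_e \q_{ij}$ vanishes, for all intra-element edges, and these edges cover $\mathcal{G}$.

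\emph{Main obstacle.} The delicate point is sign and degeneracy: the SVD of the paper uses $\mathbf{U},\mathbf{V} \in \SO(d)$ with positive singular values, which may fail if $\det \F_e < 0$. A reflected element gives $\Psi_{\mathrm{TR}} > 0$ even though its edge lengths could match. This gap is harmless for the direction we need, since $\Psi_{\mathrm{TR}} = 0$ still implies $\matbf{G}_e = \mathbf{0}$. We only need to note that in the local-convergence regime elements stay orientation-preserving. The quantitative bound above is what makes the implication robust along the trajectory rather than only at the limit.
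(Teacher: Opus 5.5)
Your proof is correct and follows the paper's route. The identity comes from $\p_j-\p_i=\F_e\q_{ij}$ exactly as in the paper, and the vanishing of $\matbf{G}_e$ rests on the same observation: $\Psi_{\mathrm{TR}}(\F_e)=0$ forces $\F_e$ to be orthogonal, hence $\F_e^\top\F_e=\mathbf{I}$. Your estimate $\|\matbf{G}_e\|_F\le\frac{1+c}{2}\sqrt{\Psi_{\mathrm{TR}}(\F_e)}$ is a worthwhile strengthening that the paper omits, since the paper only argues at the limit $\F_e=\mathbf{R}$; the estimate also survives the reflection issue you flag, because $\F_e^\top\F_e-\mathbf{I}=\F_e^\top(\F_e-\mathbf{R})+(\F_e-\mathbf{R})^\top\mathbf{R}$ for any orthogonal $\mathbf{R}$.
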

\begin{proof}
    By the definition of the deformation gradient $\F_e$, the corresponding edge vector in the current configuration is given by the linear mapping:
    \begin{equation*}
         \p_{ij} = \F_e \q_{ij}.
    \end{equation*}
    The squared distance error for this edge is defined as the difference between the squared lengths of the current and reference vectors:
    \begin{equation*}
    \begin{aligned}
        \delta_{ij} &= \|\p_{ij}\|^2 - \|\q_{ij}\|^2 \\
                 &= (\F_e \q_{ij})^\top (\F_e \q_{ij}) - \q_{ij}^\top \q_{ij} \\
                 &= \q_{ij}^\top (\F_e^\top \F_e - \mathbf{I}) \q_{ij} \\
                 &= 2 \q_{ij}^\top \matbf{G}_e \q_{ij},
    \end{aligned}        
    \end{equation*}
    where $\matbf{G}_e$ denotes the Green-Lagrange strain tensor in continuum mechanics.

\end{proof}
The proposed controller minimizes the potential energy $\Psi_{\mathrm{TR}}(\F_e) = \|\F_e - \Proj_{\SO(d)}(\F_e)\|_F^2$.
    When the energy converges to zero, we have $\F_e = \mathbf{R}$ for some rotation matrix $\mathbf{R} \in \SO(d)$.
    This implies $\F_e^\top \F_e = \mathbf{R}^\top \mathbf{R} = \mathbf{I}$, and consequently, the strain tensor vanishes: $\matbf{G}_e = \mathbf{0}$.
    Since the strain tensor $\matbf{G}_e$ becomes zero, its projection along any edge direction $\q_{ij}$ also becomes zero, ensuring that the distance error $\delta_{ij}$ converges to zero for all edges within the element.
    Distance control penalizes the projection of strain along edges, whereas our method penalizes the full strain tensor norm $\|\matbf{G}_e\|_F$. 
    Thus, distance control is a sparse sampling of the deformation energy.
To provide further intuition, traditional distance-based controllers explicitly constrain only a specific set of 1D edges, requiring complex graphical rigidity conditions to prevent internal structural flexing. 
In contrast, by minimizing the full Frobenius norm of the strain tensor $\|\matbf{G}_e\|_F$, our element-based framework views the formation as a solid continuous body. 
Consequently, it inherently constrains the distance between any two arbitrary points within the element, effectively enforcing distance rigidity without relying on edge-specific graph topologies.
As will be shown in the simulation section, this property leads to a more uniform convergence rate across arbitrary desired configurations. 
However, its rigorous analysis is still an open problem since the involved tensor is trajectory-dependent and hence it is difficult to characterize the transient performance.

\subsubsection{Connection to Bearing-based Control}
Bearing-based control typically requires the direction of the edges to be preserved (parallel rigidity), i.e., $\p_j - \p_i$ is parallel to $\q_j - \q_i$, allowing for scaling. 
This corresponds to the scaling-invariant mode ($\Psi_{\mathrm{TS}}$), where $\F_e$ is a scaled identity matrix $s\mathbf{I}$.

\begin{proposition}\label{prop:bearing}
    Minimizing $\Psi_{\mathrm{TS}}(\F_e)$ enforces the convergence of the deviatoric deformation gradient $\mathrm{dev}(\F_e)$ to zero. 
    The bearing error for an edge $\q_{ij}$ is the projection of $\mathrm{dev}(\F_e) \q_{ij}$ onto the subspace orthogonal to $\q_{ij}$. Specifically:
    \begin{equation}
        \delta_{ij} = \| \mathbf{P}_{\mathbf{q}_{ij}}^\perp (\p_j - \p_i) \|^2 = \| \mathbf{P}_{\mathbf{q}_{ij}}^\perp \mathrm{dev}(\F_e)\mathbf{q}_{ij} \|^2,
    \end{equation}
    where $\mathbf{P}_{\mathbf{q}_{ij}}^\perp = \mathbf{I} - \frac{\mathbf{q}_{ij}\mathbf{q}_{ij}^\top}{\|\mathbf{q}_{ij}\|^2}$ is the orthogonal projector.
\end{proposition}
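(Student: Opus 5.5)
The plan is to mirror the proof of Proposition \ref{prop:dist}: first write the current edge as the image of the reference edge under $\F_e$, then split $\F_e$ into its spherical (scaling) part and its deviatoric part, and observe that the spherical part contributes nothing orthogonal to $\q_{ij}$. Concretely, we define $\mathrm{dev}(\F_e) = \F_e - s_{\mathrm{TS}}\mathbf{I}$ with $s_{\mathrm{TS}} = \trace(\F_e)/d$, exactly the regressor used in \eqref{eq:energy_ts}. Then $\Psi_{\mathrm{TS}}(\F_e) = \|\mathrm{dev}(\F_e)\|_F^2$ holds by definition. Hence $\Psi_{\mathrm{TS}}(\F_e)\to 0$ (in particular at the equilibria characterized through Lemma \ref{lem:Psi} and Theorem \ref{thm:main_thm}) is equivalent to $\mathrm{dev}(\F_e)\to\mathbf{0}$. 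This gives the first claim.

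For the identity itself, we take any two vertices $i,j\in\mathcal{V}_e$. Since $\F_e\mathbf{S}_{e,\mathrm{ref}}=\mathbf{S}_e$, the map $\q\mapsto \F_e\q$ sends every reference edge in the element to the current edge. This is the same step used in Proposition \ref{prop:dist}; for edges not incident to $e_0$, we take differences of columns. So $\p_j-\p_i=\F_e\q_{ij}$, where $\q_{ij}=\q_j-\q_i$. Substituting $\F_e = s_{\mathrm{TS}}\mathbf{I} + \mathrm{dev}(\F_e)$ gives
\begin{equation*}
\mathbf{P}_{\q_{ij}}^\perp(\p_j-\p_i) = s_{\mathrm{TS}}\mathbf{P}_{\q_{ij}}^\perp\q_{ij} + \mathbf{P}_{\q_{ij}}^\perp\mathrm{dev}(\F_e)\q_{ij}.
\end{equation*}
The first term vanishes because $\mathbf{P}_{\q_{ij}}^\perp\q_{ij}=\mathbf{0}$. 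Taking squared norms then yields the stated equality for $\delta_{ij}$.

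To conclude that this is a projection, we bound $\delta_{ij}\le \|\mathrm{dev}(\F_e)\|_F^2\|\q_{ij}\|^2$, using that the projector has spectral norm at most one. It follows that every bearing error in the element is controlled by $\Psi_{\mathrm{TS}}(\F_e)$. We would also note that $\delta_{ij}=0$ is exactly the parallelism condition $\p_j-\p_i\parallel\q_j-\q_i$. Bearing control therefore samples $\mathrm{dev}(\F_e)$ only along the edge directions, while $\Psi_{\mathrm{TS}}$ penalizes the whole tensor.

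The main subtlety is not computational. It is being careful that $\delta_{ij}$ is defined with the unnormalized current vector. The claimed identity holds for this unnormalized quantity but not for the normalized unit-bearing difference. We would state that convention explicitly and remark that the normalized bearing error vanishes exactly when $\delta_{ij}$ does, provided $\p_j\neq\p_i$.
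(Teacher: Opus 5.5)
Your proposal is correct and follows essentially the same route as the paper. You write $\p_j-\p_i=\F_e\q_{ij}$, split $\F_e=s_{\mathrm{TS}}\mathbf{I}+\mathrm{dev}(\F_e)$, and remove the spherical part using $\mathbf{P}_{\q_{ij}}^\perp\q_{ij}=\mathbf{0}$. Your justification for edges not incident to $e_0$ and the bound $\delta_{ij}\le\|\mathrm{dev}(\F_e)\|_F^2\|\q_{ij}\|^2$ are useful details the paper leaves implicit.

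One small correction to your closing remark. $\delta_{ij}$ vanishes exactly when $\|\mathbf{P}_{\q_{ij}}^\perp\mathbf{g}_{ij}\|^2=\delta_{ij}/\|\p_j-\p_i\|^2$ does, but not exactly when the unit-bearing difference $\mathbf{g}_{ij}-\mathbf{g}_{ij}^{*}$ does. Antiparallel edges also give $\delta_{ij}=0$, for example $\F_e=s\mathbf{I}$ with $s<0$, which $\Psi_{\mathrm{TS}}$ does not penalize.
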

\begin{proof}
    The standard bearing error energy is defined as the squared distance of the current edge vector $\p_j - \p_i$ from the line spanned by $\q_j - \q_i$:
    \begin{equation*}
        \delta_{ij} = \| \mathbf{P}_{\mathbf{q}_{ij}}^\perp (\p_j - \p_i) \|^2 = \| \mathbf{P}_{\mathbf{q}_{ij}}^\perp \F_e \mathbf{q}_{ij} \|^2.
    \end{equation*}
    The scaling-invariant energy minimizes $\|\mathrm{dev}(\F_e)\|_F^2$.
    Substituting $\F_e = \mathrm{dev}(\F_e) + s\mathbf{I}$ into the bearing error:
    \begin{equation*}
       \begin{aligned}
        \delta_{ij} &= \| \mathbf{P}_{\mathbf{q}_{ij}}^\perp (\mathrm{dev}(\F_e) + s\mathbf{I}) \mathbf{q}_{ij} \|^2 \\
        &= \| \mathbf{P}_{\mathbf{q}_{ij}}^\perp \mathrm{dev}(\F_e) \mathbf{q}_{ij} + s \mathbf{P}_{\mathbf{q}_{ij}}^\perp \mathbf{q}_{ij} \|^2 \\
        &= \| \mathbf{P}_{\mathbf{q}_{ij}}^\perp \mathrm{dev}(\F_e) \mathbf{q}_{ij} \|^2, 
    \end{aligned}     
    \end{equation*}
    where the last equality follows from $\mathbf{P}_{\mathbf{q}_{ij}}^\perp \mathbf{q}_{ij} = \mathbf{0}$. 
\end{proof}

This equation reveals that bearing-based control penalizes the projection of the deviatoric tensor $\mathrm{dev}(\F_e)$ onto the orthogonal complement of the edge direction.
    In contrast, our proposed controller minimizes the full Frobenius norm $\|\mathrm{dev}(\F_e)\|_F^2$, effectively enforcing bearing constraints on all possible directions within the element simultaneously.
    Thus, bearing control is a sparse sampling of the deviatoric energy $\|\mathrm{dev}(\F_e)\|_F^2$ minimized by $\Psi_{\mathrm{TS}}$.
From a physical standpoint, existing methods treat geometric constraints in isolation, penalizing directional deviations edge-by-edge. 
Our framework reveals that these isolated constraints are essentially capturing specific aspects of shear deformation within the material. 
By penalizing the entire deviatoric tensor, our method simultaneously suppresses shear deformations across all possible directions, thereby intrinsically satisfying any arbitrary bearing constraints within the element.

\subsubsection{Connection to Angle and RoD Control:}
Angle-based and RoD-based controllers aim to minimize specific geometric errors invariant to similarity transformations.
We show that our similarity-invariant controller ($\Psi_{\mathrm{TRS}}$) minimizes a tensor norm that upper bounds both of these specific errors.
We define the deviatoric Cauchy-Green tensor as $\mathbf{D}_e = \mathbf{C}_e - \trace(\mathbf{C}_e)/d \cdot \mathbf{I}$, where $\mathbf{C}_e = \F_e^\top \F_e$. 
Note that $\Psi_{\mathrm{TRS}}(\F_e) \to 0$ implies $\mathbf{D}_e \to \mathbf{0}$.

\begin{proposition} \label{prop:angle}
    For any triplet of agents $(i, j, k)$ within element $e$, let $\mathbf{u} = \q_j - \q_i$ and $\mathbf{v} = \q_k - \q_i$ be the reference edge vectors, $\mathbf{g}_{ij} = \p_{ij}/\|\p_{ij}\|$, $s$ be the scale up to which the formation is close to the reference shape.
    \begin{enumerate}
        \item The angle error corresponds to the projection of $\mathbf{D}_e$ onto the off-diagonal direction defined by the edges:
        \begin{equation}
            \mathbf{g}_{ij}^\top \mathbf{g}_{ik} - \mathbf{g}_{ij}^{*\top} \mathbf{g}_{ik}^{*} \approx \frac{1}{s^2 \|\mathbf{u}\| \|\mathbf{v}\|} \mathbf{u}^\top \mathbf{D}_e \mathbf{v}.
        \end{equation}
        \item The RoD error corresponds to the projection of $\mathbf{D}_e$ onto the diagonal anisotropy direction:
        \begin{equation}
            \frac{d_{ij}^2}{d_{ik}^2} - \frac{d_{ij}^{*2}}{d_{ik}^{*2}} \approx 
            \frac{\|\mathbf{u}\|^2}{s^2 \|\mathbf{v}\|^2}  \left( 
                \frac{\mathbf{u}^\top \mathbf{D}_e \mathbf{u}}{\|\mathbf{u}\|^2} -  
                \frac{\mathbf{v}^\top \mathbf{D}_e \mathbf{v}}{\|\mathbf{v}\|^2} \right).
        \end{equation}
    \end{enumerate}
    The term in the parenthesis is exactly the inner product $\langle \mathbf{D}_e, \frac{\mathbf{u}\mathbf{u}^\top}{\|\mathbf{u}\|^2} - \frac{\mathbf{v}\mathbf{v}^\top}{\|\mathbf{v}\|^2} \rangle_F$.
    This shows that the RoD controller penalizes the anisotropy of the deformation tensor along the two edge directions, normalized by their reference lengths.

\end{proposition}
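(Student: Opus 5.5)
The plan is to reduce both items to first-order expansions of the Cauchy--Green tensor $\mathbf{C}_e=\F_e^\top\F_e$ around its value at the target shape. Within element $e$, every relative position is the image of its reference vector, $\p_{ij}=\F_e\mathbf{u}$ and $\p_{ik}=\F_e\mathbf{v}$. Hence
\begin{equation*}
\|\p_{ij}\|^2=\mathbf{u}^\top\mathbf{C}_e\mathbf{u},\qquad \p_{ij}^\top\p_{ik}=\mathbf{u}^\top\mathbf{C}_e\mathbf{v}.
\end{equation*}
Since the formation is assumed close to the reference up to scale $s$, I write $\F_e=s\mathbf{R}+\mathbf{E}$ with $\mathbf{E}$ small. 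Then $\mathbf{C}_e=s^2\mathbf{I}+\mathbf{H}$, where $\mathbf{H}=s(\mathbf{R}^\top\mathbf{E}+\mathbf{E}^\top\mathbf{R})+O(\|\mathbf{E}\|^2)$ is symmetric and small. Write $\mathbf{C}_e=c\mathbf{I}+\mathbf{D}_e$ with $c=\trace(\mathbf{C}_e)/d=s^2+O(\|\mathbf{H}\|)$. The isotropic part $c\mathbf{I}$ cancels from every similarity-invariant quantity below. This is why only $\mathbf{D}_e$ can appear, and it also lets me identify $c$ with $s^2$ at leading order.

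\textbf{Angle term.} I will evaluate
\begin{equation*}
\mathbf{g}_{ij}^\top\mathbf{g}_{ik}=\frac{\mathbf{u}^\top\mathbf{C}_e\mathbf{v}}{\sqrt{\mathbf{u}^\top\mathbf{C}_e\mathbf{u}}\sqrt{\mathbf{v}^\top\mathbf{C}_e\mathbf{v}}}.
\end{equation*}
Substituting $\mathbf{C}_e=c\mathbf{I}+\mathbf{D}_e$, the numerator is $c\,\mathbf{u}^\top\mathbf{v}+\mathbf{u}^\top\mathbf{D}_e\mathbf{v}$. The denominator is $c\|\mathbf{u}\|\|\mathbf{v}\|\bigl(1+O(\|\mathbf{D}_e\|/c)\bigr)$. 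The reference value $\mathbf{g}_{ij}^{*\top}\mathbf{g}_{ik}^*=\mathbf{u}^\top\mathbf{v}/(\|\mathbf{u}\|\|\mathbf{v}\|)$ is exactly the zeroth-order term. The leading correction is therefore $\mathbf{u}^\top\mathbf{D}_e\mathbf{v}/(c\|\mathbf{u}\|\|\mathbf{v}\|)$, and replacing $c$ by $s^2$ gives item 1.

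The denominator perturbation also contributes first-order terms proportional to $\mathbf{u}^\top\mathbf{v}$, namely the diagonal projections $\mathbf{u}^\top\mathbf{D}_e\mathbf{u}$ and $\mathbf{v}^\top\mathbf{D}_e\mathbf{v}$. I would state that the ``$\approx$'' retains the dominant off-diagonal projection, which is exact to first order when $\mathbf{u}\perp\mathbf{v}$. In general I would record these diagonal corrections explicitly as a remark. This bookkeeping is the main delicacy of the proof: making precise what ``$\approx$'' means. I would present it as a first-order Taylor statement in $\|\mathbf{D}_e\|/s^2$.

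\textbf{RoD term.} I will compute
\begin{equation*}
\frac{d_{ij}^2}{d_{ik}^2}=\frac{c\|\mathbf{u}\|^2+\mathbf{u}^\top\mathbf{D}_e\mathbf{u}}{c\|\mathbf{v}\|^2+\mathbf{v}^\top\mathbf{D}_e\mathbf{v}}.
\end{equation*}
A first-order expansion $(a+\alpha)/(b+\beta)\approx a/b+(\alpha b-a\beta)/b^2$ gives
\begin{equation*}
\frac{d_{ij}^2}{d_{ik}^2}\approx\frac{\|\mathbf{u}\|^2}{\|\mathbf{v}\|^2}+\frac{\|\mathbf{u}\|^2}{c\|\mathbf{v}\|^2}\left(\frac{\mathbf{u}^\top\mathbf{D}_e\mathbf{u}}{\|\mathbf{u}\|^2}-\frac{\mathbf{v}^\top\mathbf{D}_e\mathbf{v}}{\|\mathbf{v}\|^2}\right).
\end{equation*}
The first term is the reference ratio, so with $c\approx s^2$ this is item 2. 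Here the expansion is clean, with no cross terms to discard.

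Finally, the inner-product identity follows from $\mathbf{x}^\top\mathbf{D}_e\mathbf{x}=\langle\mathbf{D}_e,\mathbf{x}\mathbf{x}^\top\rangle_F$ and linearity. The concluding interpretation connects to the statement that $\Psi_{\mathrm{TRS}}\to0$ forces $\mathbf{D}_e\to\mathbf{0}$. I would add a one-line justification: if $\F_e=s\mathbf{R}$, then $\mathbf{C}_e=s^2\mathbf{I}$, and continuity does the rest. By Cauchy--Schwarz, both errors are then bounded by a constant times $\|\mathbf{D}_e\|_F$.
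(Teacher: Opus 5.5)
You follow the same route as the paper. You split $\mathbf{C}_e=\F_e^\top\F_e$ into an isotropic part plus $\mathbf{D}_e$, then Taylor-expand the cosine and the squared ratio to first order. Your item-2 computation is identical to the paper's. Your exact split $\mathbf{C}_e=c\mathbf{I}+\mathbf{D}_e$ with $c=\mathrm{tr}(\mathbf{C}_e)/d$ is cleaner than the paper's ``$\mathbf{C}_e\approx s^2\mathbf{I}+\mathbf{D}_e$''. On item 1 you have found a real defect that the paper's proof hides. The proof invokes $(1+x)^{-1/2}\approx 1-x/2$ but then keeps only the numerator perturbation. The displayed formula is therefore first-order correct only when $\mathbf{u}^\top\mathbf{v}=0$.

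The weak point is how you reconcile this. The discarded denominator terms are not subdominant; they are of exactly the same order as the retained one. With $\hat{\mathbf{u}}=\mathbf{u}/\|\mathbf{u}\|$ and $\hat{\mathbf{v}}=\mathbf{v}/\|\mathbf{v}\|$, the correct first-order statement is
\[
\mathbf{g}_{ij}^\top\mathbf{g}_{ik}-\mathbf{g}_{ij}^{*\top}\mathbf{g}_{ik}^{*}
\approx\frac{1}{s^2}\Bigl(\hat{\mathbf{u}}^\top\mathbf{D}_e\hat{\mathbf{v}}
-\tfrac12(\hat{\mathbf{u}}^\top\hat{\mathbf{v}})\bigl(\hat{\mathbf{u}}^\top\mathbf{D}_e\hat{\mathbf{u}}+\hat{\mathbf{v}}^\top\mathbf{D}_e\hat{\mathbf{v}}\bigr)\Bigr).
\]
In the nearly-parallel limit $\hat{\mathbf{v}}\to\hat{\mathbf{u}}$ this bracket tends to $0$, because the cosine is pinned at $1$. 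The stated formula instead tends to $\hat{\mathbf{u}}^\top\mathbf{D}_e\hat{\mathbf{u}}/s^2$, which can be positive and would push the cosine above $1$.

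So item 1 cannot be proved as written. Either restrict it to $\mathbf{u}\perp\mathbf{v}$, or replace the direction $\hat{\mathbf{u}}\hat{\mathbf{v}}^\top$ by the traceless symmetric matrix
\[
\mathrm{sym}(\hat{\mathbf{u}}\hat{\mathbf{v}}^\top)-\tfrac12(\hat{\mathbf{u}}^\top\hat{\mathbf{v}})(\hat{\mathbf{u}}\hat{\mathbf{u}}^\top+\hat{\mathbf{v}}\hat{\mathbf{v}}^\top).
\]
The paper's qualitative conclusion survives unchanged: the angle error is a linear projection of $\mathbf{D}_e$, so it is $O(\|\mathbf{D}_e\|_F)$ and vanishes when $\Psi_{\mathrm{TRS}}\to0$. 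Your Cauchy--Schwarz bound also goes through with the corrected direction.
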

\begin{proof}
    Consider the similarity-invariant mode where the formation is close to the reference shape up to a scale $s$, i.e., $\mathbf{C}_e \approx s^2 \mathbf{I} + \mathbf{D}_e$, with $\|\mathbf{D}_e\| \ll s^2$.
    The current edge vectors are $\mathbf{u}' = \F_e \mathbf{u}$ and $\mathbf{v}' = \F_e \mathbf{v}$.
    
    \textit{1) Angle relation:}
    The term $\mathbf{g}_{ij}^\top \mathbf{g}_{ik}$ represents the cosine of the current angle $\theta'$:
    \begin{equation*}
        \cos \theta' = \frac{\mathbf{u}'^\top \mathbf{v}'}{\|\mathbf{u}'\| \|\mathbf{v}'\|} = \frac{\mathbf{u}^\top \mathbf{C}_e \mathbf{v}}{\sqrt{\mathbf{u}^\top \mathbf{C}_e \mathbf{u}} \sqrt{\mathbf{v}^\top \mathbf{C}_e \mathbf{v}}}.
    \end{equation*}
    Substituting $\mathbf{C}_e = s^2 \mathbf{I} + \mathbf{D}_e$ and using the first-order Taylor expansion $(1+x)^{-1/2} \approx 1 - x/2$:
    \begin{equation*}
        \cos \theta' \approx \frac{s^2 \mathbf{u}^\top \mathbf{v} + \mathbf{u}^\top \mathbf{D}_e \mathbf{v}}{s \|\mathbf{u}\| s \|\mathbf{v}\|} = \frac{\mathbf{u}^\top \mathbf{v}}{\|\mathbf{u}\| \|\mathbf{v}\|} + \frac{\mathbf{u}^\top \mathbf{D}_e \mathbf{v}}{s^2 \|\mathbf{u}\| \|\mathbf{v}\|}.
    \end{equation*}
    Since $\mathbf{g}_{ij}^{*\top} \mathbf{g}_{ik}^{*} = \frac{\mathbf{u}^\top \mathbf{v}}{\|\mathbf{u}\| \|\mathbf{v}\|}$, the residual is proportional $\langle \mathbf{D}_e, \mathbf{u}\mathbf{v}^\top \rangle_F$, which represents the shear component of the strain tensor.

    \textit{2) RoD Relation:}
    The current squared RoD is:
    \begin{equation*}
        \frac{d_{ij}^2}{d_{ik}^2} = \frac{\mathbf{u}^\top \mathbf{C}_e \mathbf{u}}{\mathbf{v}^\top \mathbf{C}_e \mathbf{v}} = \frac{s^2 \|\mathbf{u}\|^2 + \mathbf{u}^\top \mathbf{D}_e \mathbf{u}}{s^2 \|\mathbf{v}\|^2 + \mathbf{v}^\top \mathbf{D}_e \mathbf{v}}.
    \end{equation*}
    Using the expansion $\frac{a+c}{b+d} \approx \frac{a}{b} + \frac{c}{b} - \frac{a d}{b^2}$:
    \begin{equation*}
        \frac{d_{ij}^2}{d_{ik}^2} \approx \frac{\|\mathbf{u}\|^2}{\|\mathbf{v}\|^2} + \frac{\|\mathbf{u}\|^2}{s^2 \|\mathbf{v}\|^2}\left( 
                \frac{\mathbf{u}^\top \mathbf{D}_e \mathbf{u}}{\|\mathbf{u}\|^2} -  
                \frac{\mathbf{v}^\top \mathbf{D}_e \mathbf{v}}{\|\mathbf{v}\|^2} \right).
    \end{equation*}
    The term in the parenthesis is exactly the inner product $\langle \mathbf{D}_e, \frac{\mathbf{u}\mathbf{u}^\top}{\|\mathbf{u}\|^2} - \frac{\mathbf{v}\mathbf{v}^\top}{\|\mathbf{v}\|^2} \rangle_F$.
    This shows that the RoD controller penalizes the anisotropy of the deformation tensor along the two edge directions, normalized by their reference lengths.
\end{proof}

Thus, angle and RoD-based controllers penalize sparse scalar projections of the deviatoric tensor $\mathbf{D}_e$, whereas controller \eqref{eq:control_law} with \eqref{eq:energy_trs} penalizes the norm $\|\mathbf{D}_e\|_F^2$, thereby enforcing any arbitrary angle and RoD constraints within the element simultaneously.

Up to now, we have established the connections with existing rigidity-based controllers. 
Several remarks shall be made here:
\begin{itemize}
    \item As we have shown in Prop. \ref{prop:dist}, it suffices to minimize the Green-Lagrange strain tensor; it is also viable to use $\|\matbf{G}_{e}\|_{F}^2$ as the rotation-invariant energy function. 
The bearing-based controller discussed here still uses relative position measurements.
If a bearing-only controller were discussed \cite{zhao2019bearing}, one shall merge the normalization term into the $w_{e}$ term in \eqref{eq:energy} for a detailed analysis. 
Nevertheless, we omitted these designs and analyses for a unified and concise presentation.
\item Our current controller design requires relative position measurements within each element, which may not be directly available in some scenarios. 
However, it can be extended to work with other types of relative measurements, e.g., distances, bearings, angles, by first estimating the relative positions via distributed localization algorithms \cite{yang2024joint,chen2025multirobot,fang2023integrated,han2024survey,DBLP:conf/cdc/WangLCP25}.
We leave these extensions to future work.
\item On the other hand, from the above analysis, it can be found that, except for the displacement-based controller, existing works can be viewed as specific sparse samplings of the more general deformation energy defined on the formation. 
The underlying reason is that these methods are only given these sparse target geometric constraints.
It may be argued that our proposed controllers require more information than these existing methods, i.e., the full reference positions of all agents within each element rather than only some specific geometric constraints.
However, firstly, in practical applications, the full reference positions are prescribed manually and usually easier to be obtained; secondly, even if only sparse geometric constraints are given, our proposed controllers can still be applied by first constructing a reference formation satisfying these constraints via localization algorithms~\cite{yang2024joint,chen2025multirobot,fang2023integrated,han2024survey,DBLP:conf/cdc/WangLCP25}, since the required graphical condition has been established in Lemma \ref{lem:inf_rigid}.
\end{itemize}

Interestingly, this design workflow—reconstructing a reference geometry to define local interaction weights—closely resembles the paradigm of Laplacian-based controllers \cite{lin2016graph,lin2016necessary}. 
In those methods, the formation is implicitly defined by the kernel of a Laplacian matrix derived from the reference shape. 
This observation inspires us to bridge the two long-perceived distinct categories by examining the simplest form of our deformation energy: the Dirichlet energy.

\subsubsection{Connection to Laplacian-based Methods}
Laplacian-based formation control relies on the property that a configuration $\p$ lies in the kernel space of a constructed Laplacian matrix $\gbf{\Omega}$, i.e., $(\gbf{\Omega}\otimes \mathbf{I}) \p = \mathbf{0}$, where $\gbf{\Omega}$ can be computed from the barycentric coordinate or stress matrix \cite{lin2016distributed, zhao2018affine, yang2018constructing}.
We show that this approach can be recovered from the proposed element-based controllers, where the Laplacian matrix is explicitly derived from the reference geometry of the elements, and the control law can be interpreted as minimizing a quadratic form involving this matrix, i.e., the Dirichlet energy 
\begin{equation}
    \label{eq:Dirichlet}
    \Psi_{\text{DE}}(\F_e) = \|\F_e\|_F^2.
\end{equation}

\begin{proposition} \label{prop:lap}
    Minimizing the weighted Dirichlet energy $V(\p) = \sum_e w_e \|\F_e\|_F^2$ is exactly equivalent to minimizing the quadratic form involving a Laplacian matrix:
    \begin{equation}
        V(\p) = \trace(\p^\top (\matbf{L}\otimes \mathbf{I}) \p),
    \end{equation}
    where $\matbf{L} \in \mathbb{R}^{N \times N}$ is the global Laplacian matrix. 
    The element $\matbf{L}_{ij}$ is explicitly derived from the reference geometry of the elements sharing edge $(i,j)$.
\end{proposition}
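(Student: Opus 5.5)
The plan is to write each $\F_e$ as a linear function of the stacked positions, expand $\|\F_e\|_F^2$ as a quadratic form in the positions of the $d+1$ vertices of $e$, and then assemble these local quadratic forms into a global one in the usual finite-element way.

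\emph{Step 1 (local expression).} For element $e$, introduce the incidence-type matrix $\mathbf{D} = [-\mathbf{1}_d, \mathbf{I}_d]^\top \in \R^{(d+1)\times d}$, so that $\mathbf{S}_e = \mathbf{P}_e \mathbf{D}$ with $\mathbf{P}_e = [\p_{e_0},\dots,\p_{e_d}] \in \R^{d\times(d+1)}$. Setting $\mathbf{B}_e = \mathbf{D}\mathbf{S}_{e,\mathrm{ref}}^{-1} \in \R^{(d+1)\times d}$, we get $\F_e = \mathbf{P}_e \mathbf{B}_e$. Then
\begin{equation*}
\|\F_e\|_F^2 = \trace(\mathbf{P}_e \mathbf{B}_e \mathbf{B}_e^\top \mathbf{P}_e^\top) = \trace(\mathbf{P}_e \mathbf{K}_e \mathbf{P}_e^\top),
\end{equation*}
where $\mathbf{K}_e = \mathbf{B}_e\mathbf{B}_e^\top$ is symmetric positive semidefinite and depends only on the reference simplex. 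Since $\mathbf{D}^\top \mathbf{1}_{d+1} = \mathbf{0}$, we have $\mathbf{K}_e\mathbf{1}_{d+1}=\mathbf{0}$, so every row of $\mathbf{K}_e$ sums to zero. Hence $\mathbf{K}_e$ is a (possibly signed) Laplacian on the complete graph of the $d+1$ vertices, with off-diagonal weights $-(\mathbf{K}_e)_{ab}$. I will identify these entries with the familiar cotangent weights (in 2D, $-(\mathbf{K}_e)_{ab} = \cot\theta/(2w_e)$ for the angle opposite edge $ab$). The tool is the barycentric gradient: the rows of $\mathbf{B}_e$ are the constant gradients $\nabla\lambda_a$ of the barycentric coordinate functions of the reference simplex, so $(\mathbf{K}_e)_{ab} = \nabla\lambda_a^\top\nabla\lambda_b$.

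\emph{Step 2 (assembly).} Let $\mathbf{X} = [\p_1,\dots,\p_N]^\top \in \R^{N\times d}$, and let $\mathbf{\Pi}_e \in \R^{N\times(d+1)}$ be the selection matrix with $\mathbf{P}_e^\top = \mathbf{\Pi}_e^\top \mathbf{X}$. Then
\begin{equation*}
V = \sum_e w_e \trace(\mathbf{X}^\top \mathbf{\Pi}_e \mathbf{K}_e \mathbf{\Pi}_e^\top \mathbf{X}) = \trace(\mathbf{X}^\top \matbf{L}\mathbf{X}), \qquad \matbf{L} = \sum_e w_e \mathbf{\Pi}_e\mathbf{K}_e\mathbf{\Pi}_e^\top.
\end{equation*}
The identity $\trace(\mathbf{X}^\top\matbf{L}\mathbf{X}) = \p^\top(\matbf{L}\otimes\mathbf{I}_d)\p$ gives the stated form, reading the paper's ``$\trace$'' of a scalar trivially. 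The matrix $\matbf{L}$ is symmetric and satisfies $\matbf{L}\mathbf{1}_N=\mathbf{0}$ because each summand does. Its $(i,j)$ entry, for $i\neq j$, is $\sum_{e\ni i,j} w_e(\mathbf{K}_e)_{ab}$, so it is nonzero only when $i$ and $j$ share an element, and it is determined solely by the reference geometry of those elements. Finally, the gradient law \eqref{eq:control_law} becomes $\dot{\p} = -2(\matbf{L}\otimes\mathbf{I})\p$, which is the Laplacian-based form.

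\emph{Main obstacle.} The algebra above is routine. The delicate point is interpretive: $\matbf{L}$ is a \emph{signed} Laplacian, since cotangent weights go negative at obtuse angles. Moreover, with $w_e>0$ its kernel is only $\Span\{\mathbf{1}_N\}$ (by Assumption \ref{ass:topo} together with the positive semidefiniteness of each $\mathbf{K}_e$), so $\q$ does not lie in $\ker\matbf{L}$. I would therefore state the equivalence strictly as an identity of quadratic forms. I would then remark, consistent with the earlier footnote, that recovering a stress-matrix-type $\matbf{L}$ with $\q$ in its kernel requires choosing the weights $w_e$ (possibly negative) so that $\sum_e w_e \mathbf{\Pi}_e\mathbf{K}_e\mathbf{\Pi}_e^\top \mathbf{Q} = \mathbf{0}$, where $\mathbf{Q} = [\q_1,\dots,\q_N]^\top$. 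Checking the cotangent identification of $\mathbf{K}_e$ in 3D (dihedral cotangents) is the only computation I expect to need care.
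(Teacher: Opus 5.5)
Your proof is correct and follows essentially the same route as the paper. You write $\mathbf{S}_e$ linearly in the element's vertex positions via the incidence matrix $[-\mathbf{1}_d,\ \mathbf{I}_d]$, obtain the local stiffness $[-\mathbf{1}_d,\ \mathbf{I}_d]^\top(\mathbf{S}_{e,\mathrm{ref}}^\top\mathbf{S}_{e,\mathrm{ref}})^{-1}[-\mathbf{1}_d,\ \mathbf{I}_d]$ (your $w_e\mathbf{K}_e$ is exactly the paper's $\mathbf{L}_e$), and assemble over elements; your extra checks (zero row sums, the barycentric-gradient/cotangent reading of the entries, and the fact that with $w_e>0$ the kernel is only $\mathrm{span}\{\mathbf{1}_N\}$) are also correct and make explicit what the paper asserts or discusses only informally.
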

\begin{proof}
    We derive the explicit form of $\matbf{L}$ from the local element energy.
    Consider a single element $e$ with vertices $\mathcal{V}_e = \{e_0, \dots, e_d\}$. Let $\mathbf{P}_e = [\p_{e_0}, \dots, \p_{e_d}]^\top \in \mathbb{R}^{(d+1) \times d}$ be the matrix of vertex positions.
    The current shape matrix $\mathbf{S}_e = [\p_{e_1}-\p_{i_0}, \dots, \p_{e_d}-\p_{i_0}]$ can be written as a linear transformation of $\mathbf{P}_e$:
    \begin{equation*}
        \mathbf{S}_e^\top = \mathbf{B} \mathbf{P}_e, \quad \text{where } \mathbf{B} = \begin{bmatrix} -\mathbf{1}_d & \mathbf{I}_d \end{bmatrix} \in \mathbb{R}^{d \times (d+1)}.
    \end{equation*}
    The deformation gradient is $\F_e = \mathbf{S}_e \mathbf{S}_{e,\mathrm{ref}}^{-1}$. 
    The local energy weighted by volume $w_e$ is:
    \begin{equation*}
    \begin{aligned}
        E_e & = w_e \|\F_e\|_F^2 = w_e \trace(\mathbf{S}_{e,\mathrm{ref}}^{-\top} \mathbf{S}_e^\top \mathbf{S}_e \mathbf{S}_{e,\mathrm{ref}}^{-1}) \\
        & = \trace( \mathbf{S}_e \underbrace{w_e (\mathbf{S}_{e,\mathrm{ref}}^\top \mathbf{S}_{e,\mathrm{ref}})^{-1}}_{\mathbf{K}_e} \mathbf{S}_e^\top ) \\
        & = \trace( (\mathbf{B} \mathbf{P}_e)^\top \mathbf{K}_e (\mathbf{B} \mathbf{P}_e) ) \\
        & = \trace( \mathbf{P}_e^\top \underbrace{\mathbf{B}^\top \mathbf{K}_e \mathbf{B}}_{\mathbf{L}_e} \mathbf{P}_e ).
    \end{aligned}
    \end{equation*}
    Here, $\mathbf{L}_e \in \mathbb{R}^{(d+1) \times (d+1)}$ is the \textit{local stiffness matrix} (or local stress matrix) for element $e$. 
    
    The total energy is the sum over all elements: $V(\p) = \sum_e E_e$.
    Let $\p \in \mathbb{R}^{N \times d}$ be the global position matrix. The global energy can be written as:
    \begin{equation*}
        V(\p) = \trace( \p^\top (\matbf{L}\otimes \mathbf{I}) \p ).
    \end{equation*}
    The global matrix $\matbf{L} \in \mathbb{R}^{N \times N}$ is assembled from local matrices $\mathbf{L}_e$. For any two nodes $u, v \in \{1, \dots, N\}$:
    \begin{equation*}
        \matbf{L}_{uv} = \sum_{e: u,v \in \mathcal{V}_e} (\mathbf{L}_e)_{uv}.
    \end{equation*}
    
    Thus, minimizing the Dirichlet energy is connected to the Laplacian-based formation control.
\end{proof}

It is worth noting that the unconstrained minimization of the weighted Dirichlet energy $V(\p) = \trace(\p^\top (\matbf{L}\otimes \mathbf{I}) \p)$ inevitably leads to a single-point collapse (i.e., $\p \to \vecbf{1}\otimes \vecbf{t}$).
This occurs because $\mathbf{L}$ is positive semi-definite, and the absolute minimum is achieved only when all local deformations vanish ($\F_e \to \matbf{0}$).
To prevent this trivial collapse, Laplacian-based methods require a subset of agents to act as fixed boundary conditions, known as leaders/anchors.
When leaders are present, the critical points of the Dirichlet energy form \textit{harmonic maps}, satisfying the discrete Laplace equation for the followers (i.e., $(\matbf{L}_{\myset{F}}\otimes \mathbf{I}) \p = \mathbf{0}$). 
A profound property of these harmonic maps is their response to affine boundaries: if that $(\matbf{L}_{\myset{F}} \otimes \mathbf{I}) \q = \mathbf{0}$ holds and the leaders are constrained to an affine transformation $\p_{\myset{L}} = \mathbf{A}\q_{\myset{L}} + \mathbf{b}$, the unique energy-minimizing state for the followers is to exactly replicate this affine transformation, yielding $\p_{\myset{F}} = \mathbf{A}\q_{\myset{F}} + \mathbf{b}$. 
This elegantly resolves a counterintuitive phenomenon. 
Although minimizing $\|\F_{e}\|_F^2$ ostensibly drives the local deformation toward zero, the fixed leaders physically resist this collapse. Instead, the weighted Dirichlet energy acts as a global smoothing operator. 
It penalizes non-uniform distortions, forcing the deformation to distribute evenly across the entire network. 
Consequently, the local deformation gradient $\F_e$ becomes a constant matrix $\mathbf{A}$ everywhere, seamlessly interpolating the affine boundary conditions imposed by the leaders.

It is important to note that our element-based controller design uses a strictly positive weight $w_e > 0$ for each element, which physically corresponds to pre-stressing the network with purely contractive forces.
Geometrically, this guarantees that the followers will be naturally pulled into the convex hull spanned by the leaders.
While this is sufficient if all desired follower positions lie inside this hull, it inherently prevents any follower from reaching a target position outside of it, as the continuous inward pull will inevitably drag it away from its destination. 
To address this issue, one must allow negative weights $w_e < 0$ for certain elements, representing expansive forces (or struts). 
However, this renders the energy function indefinite and may lead to instability if not designed properly. 
We shall introduce the following semi-definite programming (SDP) to find the feasible weights.

The goal is to maximize the stability margin (smallest eigenvalue of the followers' stiffness matrix) while satisfying the equilibrium conditions.
Let $\myset{V}_{\myset{L}}$ and $\myset{V}_{\myset{F}}$ denote the sets of leaders and followers, respectively. 
The global stiffness matrix $\matbf{L}$ is a linear function of the element weights $\mathbf{w} = [w_1, \dots, w_M]^\top$:
\begin{equation}
    \matbf{L}(\mathbf{w}) = \sum\nolimits_{e \in \myset{E}} w_e \matbf{L}_e,
\end{equation}
where $\matbf{L}_e \in \R^{N \times N}$ is the constant geometric stiffness basis for element $e$ (derived from the reference geometry assuming unit weight, details can be found in the proof of Prop. \ref{prop:lap}). 
Let $\matbf{L}_{{\myset{FF}}}$ be the submatrix of $\matbf{L}$ corresponding to the followers.
The optimization problem is defined as:
\begin{equation} \label{eq:sdp}
\begin{aligned}
    \underset{\mathbf{w}, \lambda}{\text{max}} \quad & \lambda - \gamma \|\mathbf{w}\|_2  \\
    \text{s.t.} \quad & \matbf{L}_{{\myset{FF}}}(\mathbf{w}) - \lambda \mathbf{I}_{|\myset{V}_{\myset{F}}|} \succeq 0,  \\
    & \sum_{j \in \mathcal{V}} \matbf{L}_{ij}(\mathbf{w}) \q_j = \mathbf{0}, \quad \forall i \in \myset{V}_{\myset{F}}, \\
    & \lambda \ge \lambda_{\min}.
\end{aligned}
\end{equation}
Another SDP formulation was presented in \cite{zhao2018affine}, ours differs from that in the construction of $\matbf{L}$ tailored for element-based graphs. 
It can be observed that the decision variables are the weights $\mathbf{w}$ and the stability margin $\lambda$.
If only a minimally connected element-based graph is used, i.e., no redundant edge between any nodes in different elements except for the shared edges between elements, the degree of freedom for the decision variables is much less than the number of constraints in \eqref{eq:sdp}, which may lead to infeasibility. 
However, the connectivity condition for element-based graphs and its connection with $(d+1)$-rooted/universal rigidity \cite{lin2016necessary} are unknown and subject to future works.

Up to now, we have established the connection among our proposed framework, rigidity-based, and Laplacian-based methods.
It has long been perceived that the latter two methods are two distinct categories with different underlying principles.
Our work reveals that they are indeed closely related through the deformation gradient representation under a unified perspective from continuum mechanics with different choices of energy functions.


\begin{figure}
    \centering
    \subfloat[]{
        \includegraphics[width=0.45\textwidth]{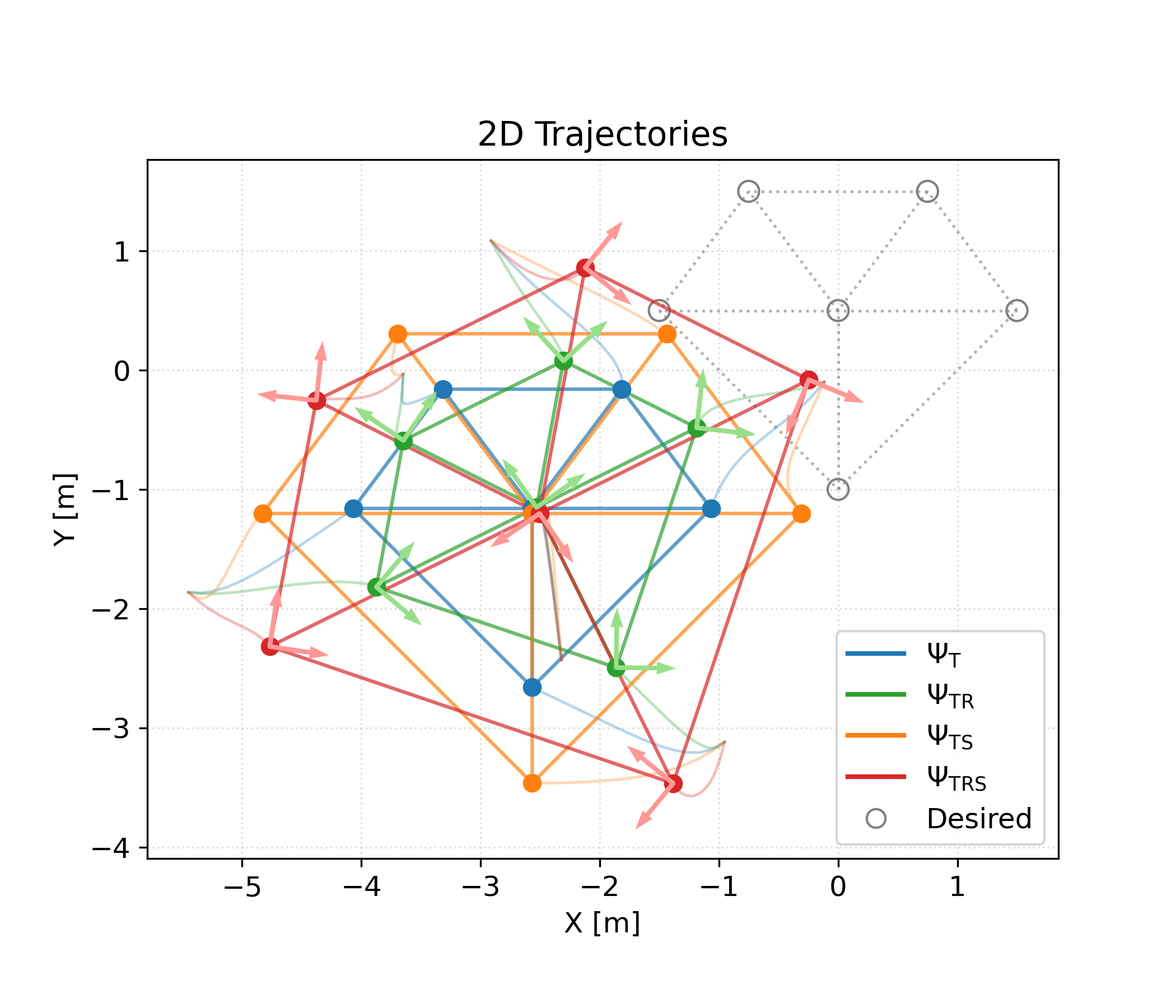}
        \label{fig:2d_traj}
    }
    \hfill
    \subfloat[]{
        \includegraphics[width=0.45\textwidth]{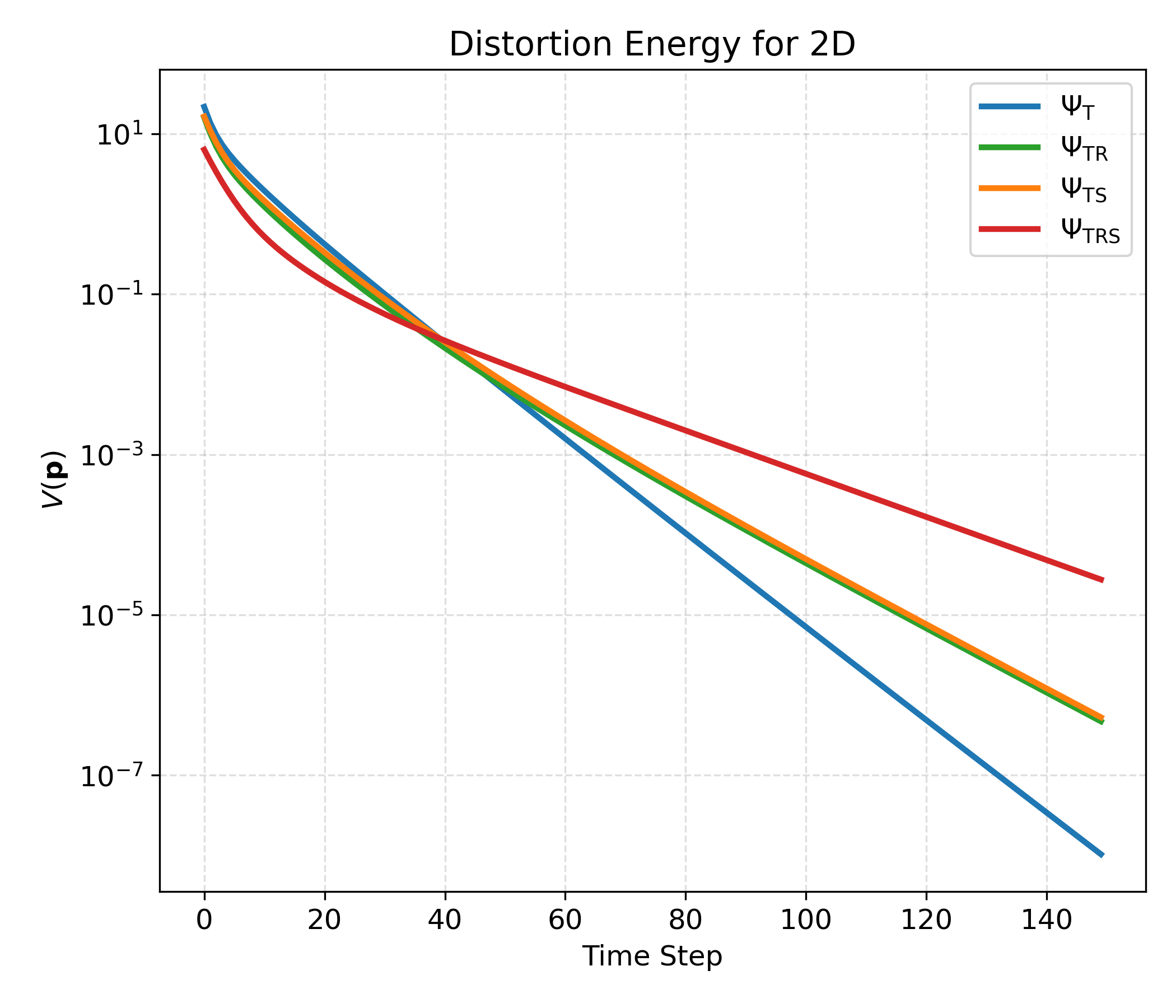}
        \label{fig:2d_energy}
    }
    \caption{Simulation results for 2D formation control ($N=6$). 
    (a) The trajectories of agents under different control laws.
    The colored arrows at the final positions of the Translation-and-Rotation-invariant and Similarity-invariant modes indicate the local coordinate frames of the agents, demonstrating that the final formation orientation is achieved without a global reference frame. 
    (b) The evolution of the potential energy $V(\p)$ on a logarithmic scale, showing exponential convergence.}
    \label{fig:2d_results}
\end{figure}

\begin{figure}[htbp]
    \centering
    \subfloat[]{
        \includegraphics[width=0.45\textwidth]{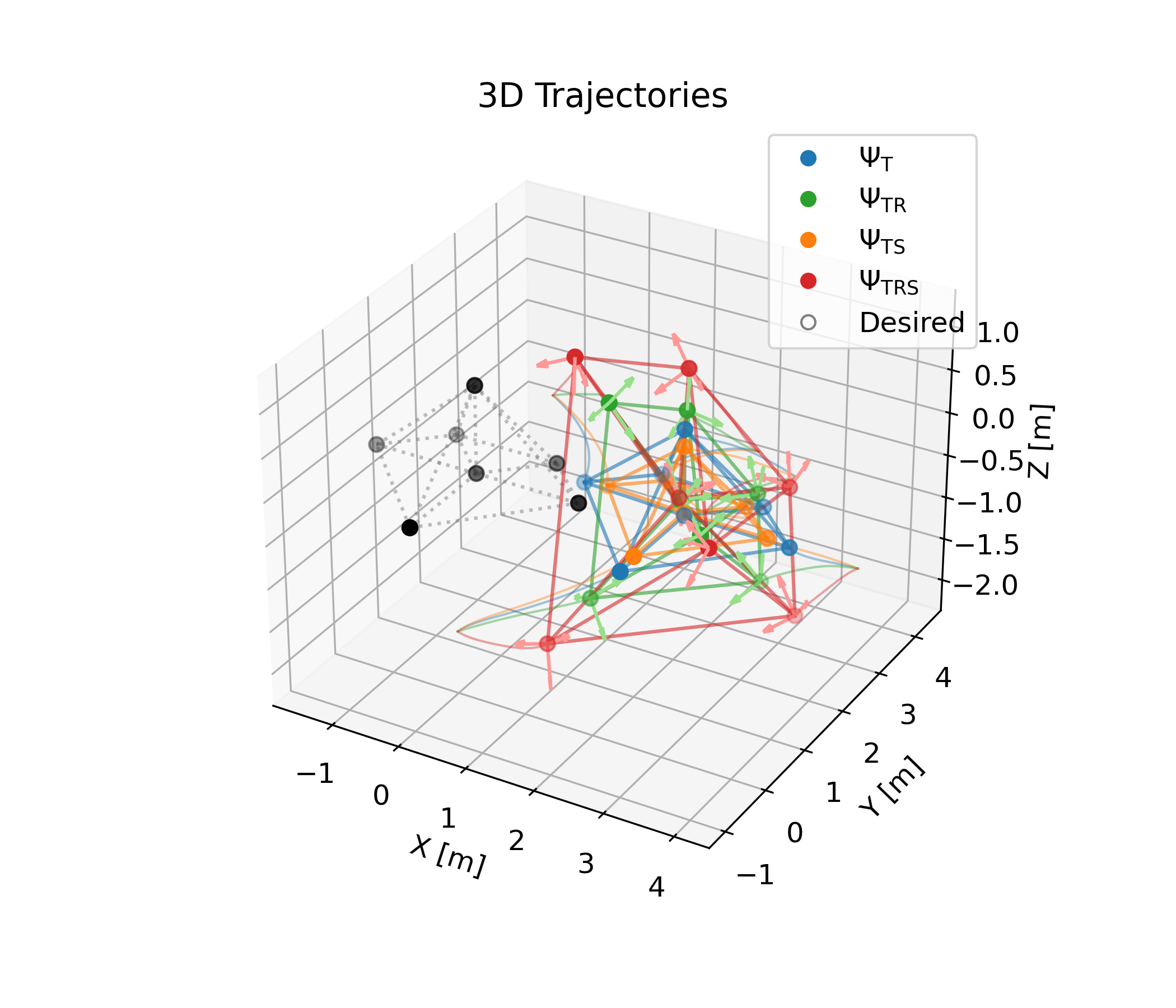}
        \label{fig:3d_traj}
    }
    \hfill
    \subfloat[]{
        \includegraphics[width=0.45\textwidth]{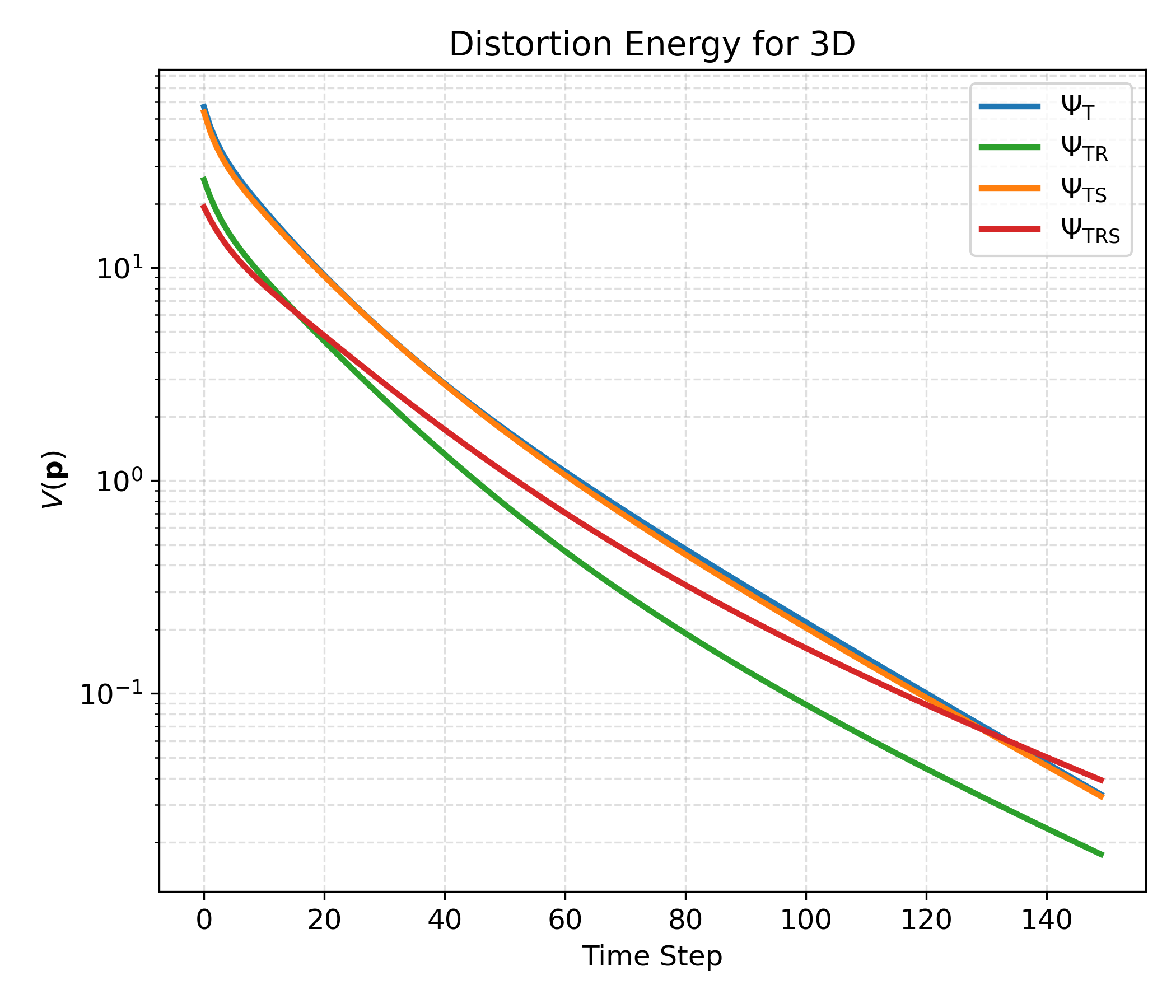}
        \label{fig:3d_energy}
    }
    \caption{Simulation results for 3D formation control ($N=7$). 
    (a) 3D trajectories of the agents. 
    The final configurations show the formation of a 3D heart-shaped structure (a pyramid with a heart base). 
    (b) The energy evolution confirms the stability and fast convergence of the proposed methods in 3D space.}
    \label{fig:3d_results}
\end{figure}

\begin{figure}
    \centering
    \includegraphics[width=\linewidth]{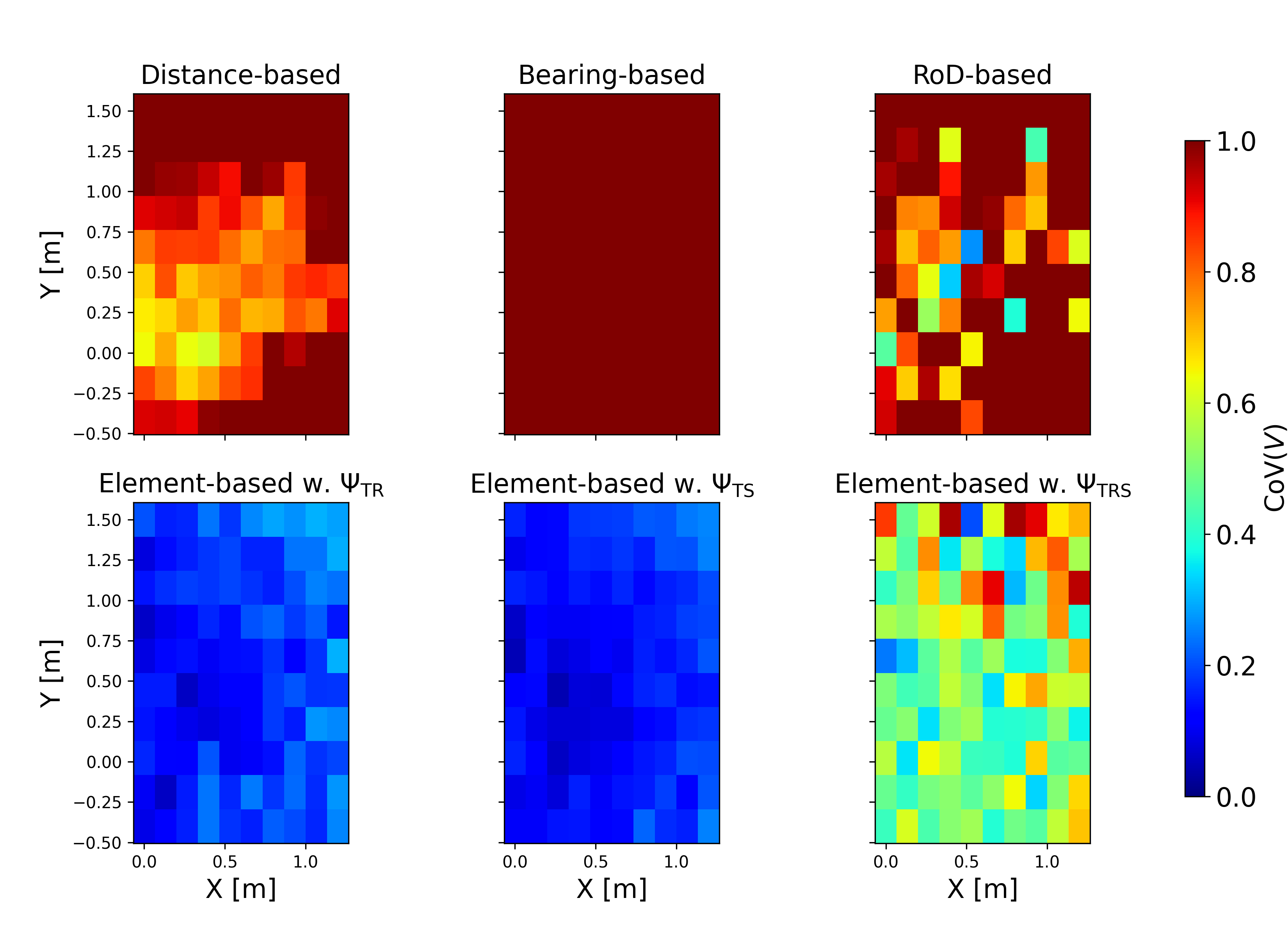}
    \caption{Heatmap of $\mathrm{CoV}(V)$ for rigidity-based (distance-, bearing-, and RoD-based) and element-based (with energy functions $\Psi_{\mathrm{TR}}$, $\Psi_{\mathrm{TS}}$, and $\Psi_{\mathrm{TRS}}$) controllers with different perturbations to the top-right node of $\q$.}
    \label{fig:cv_heatmap_comparison}
\end{figure}

\section{Simulation Results} \label{sec:simulation}
In this section, we present numerical simulations to validate the effectiveness of the proposed element-based formation control framework. 
We consider both 2D and 3D scenarios using a heart-shaped constellation of agents. 
First, we qualitatively demonstrate the geometric invariance properties and transient dynamics of the proposed controllers. 
Subsequently, we quantitatively verify their advantages to distorted configurations over traditional rigidity-based methods.

\subsection{Simulation Setup}

The interaction topology is constructed using Delaunay triangulation based on a reference configuration $\q$.
For the 2D case, we define a heart-shaped formation consisting of $N=6$ agents (one central node and five boundary nodes). 
For the 3D case, we extend the 2D shape by adding a vertex in the $z$-direction, resulting in a 3D heart structure with $N=7$ agents.

To rigorously test the convergence and the coordinate-free property, the following conditions are applied:
\begin{itemize}
    \item Initialization: The agents are initialized at positions that are a similar transformation of the reference shape, plus random Gaussian noise. 
    This ensures the initial state is far from the equilibrium.
    \item Local coordinate frame: Each agent $i$ with \eqref{eq:energy_tr} and \eqref{eq:energy_trs} operates in its own local coordinate frame $\Sigma_i$, with a randomly generated orientation $\mathbf{Q}_i \in \SO(d)$. 
    The agents do not share a common compass.
\end{itemize}

To validate the consistency of the transient convergence rates under geometrically distorted configurations, we adopt traditional rigidity-based (distance-, bearing-, and RoD-based) controllers as baselines. 
Focusing on the 2D case, we systematically perturb the desired position of the top-right node across a spatial grid of $[0, 1.2] \times [-0.4, 1.5]$. 
This specific range is selected to ensure that all evaluated methods maintain stability and avoid divergence under an identical control gain. 
To strictly isolate the intrinsic convergence behavior from the scaling influence of the control gain, we introduce the Coefficient of Variation (CoV) of the log-energy slopes as the comparative metric:
$$\mathrm{CoV}(V)=\frac{\mathrm{std}(\nabla_t \log V(\p))}{|\mathrm{mean}(\nabla_t \log V(\p))|},$$
where $\mathrm{std}(\cdot)$ and $\mathrm{mean}(\cdot)$ denote the standard deviation and mean over the active transient phase, respectively, and $\nabla_t \log V(\p)$ represents the instantaneous slope of the energy decay in the logarithmic domain.

\subsection{Results and Analysis}
The results for the 2D and 3D cases are shown in Figs.~\ref{fig:2d_results} and \ref{fig:3d_results}. 
All four controllers successfully drive the agents to form the desired heart shape, but with different final configurations reflecting their invariance groups.
As observed in Figs.~\ref{fig:2d_traj} and~\ref{fig:2d_energy}:
\begin{itemize}
    \item The translation-invariant mode (blue) forces the formation to match the reference orientation and scale exactly.
    \item The rotation-invariant mode (green) recovers the correct shape and size but settles at a rotated orientation relative to the reference.
    \item The scaling-invariant mode (orange) maintains the initial orientation but adjusts the size.
    \item The similarity-invariant mode (red) allows both rotation and scaling, finding the energy-minimal configuration closest to the initial distribution.
\end{itemize}
The arrows in Figs.~\ref{fig:2d_traj} and~\ref{fig:3d_traj} confirm the coordinate-free property: agents achieve the shape despite having random local orientations. 
Figs.~\ref{fig:2d_energy} and~\ref{fig:3d_energy} show that the energy decreases linearly on a logarithmic scale, indicating exponential convergence rates for all controllers due to the local convexity of the energy functions.

 Fig.~\ref{fig:cv_heatmap_comparison} maps the $\mathrm{CoV}$ metric across the sampled grid, illustrating the sensitivity of each controller to perturbations of the top-right node in the reference configuration $\q$.
It is evident that the traditional rigidity-based controllers exhibit vast red and yellow hotspots, indicating that they suffer from severe fluctuations and uncoordinated multi-rate convergence when the geometry becomes ill-conditioned.
In stark contrast, the proposed element-based controllers display a blue and green landscape across the entire geometric domain. 
This provides definitive empirical proof that minimizing the full deformation gradient inherently imposes isotropic tensor regularization. 
It mathematically neutralizes the spectral gap caused by geometric ill-conditioning, compelling the multi-agent system to execute highly coordinated, shape-preserving convergence regardless of the underlying structural distortion.

\section{Conclusions and Future Work} \label{sec:conclusion}

This paper has presented a unified element-based framework for multi-agent formation control, established upon the rigorous foundation of continuum mechanics. 
By shifting the fundamental unit of analysis from discrete graph edges to simplicial elements, we have shown that the deformation gradient provides a powerful and comprehensive descriptor for defining and enforcing geometric collective behaviors.
Through the lens of continuum deformation, the long-standing dichotomy between rigidity-based and Laplacian-based methods is effectively bridged.
We have demonstrated that traditional geometric constraints are essentially sparse samplings of the deformation energy tensor, while the Laplacian-based controllers emerge as the minimization of the framework's Dirichlet energy. 
This unification not only simplifies the theoretical landscape of formation control but also provides a principled methodology for designing distributed laws with various geometric invariances, supported by the physical consistency of elasticity theory.

Future work will explore the extension of this framework to more complex ``material" behaviors, such as incorporating visco-elasticity to handle dynamic obstacle avoidance or leveraging plastic deformation for adaptive formation reshaping. 
By treating robotic swarms as programmable discrete manifolds, this research opens a new pathway toward achieving sophisticated, high-dimensional coordination in autonomous systems.

\section{Appendix}
\subsection{Projection function} \label{app:proj}
For each energy function defined in Section \ref{sec:element}, the corresponding projection function $\Proj(\F)$ is given as follows.

\subsubsection{Translation-invariant energy \eqref{eq:energy_t}}
Let $\Q_{\mathrm{T}} = \{ \mathbf{I} \}$.
Since the set contains only a single element, the projection is trivial and constant:
\begin{equation*}
    \mathbf{X}^* = \Proj(\F) = \min_{\mathbf{X} \in \Q_{\mathrm{T}}} \|\F - \mathbf{X}\|_{F}^2 = \mathbf{I}.
\end{equation*}

\subsubsection{Rotation-invariant energy \eqref{eq:energy_tr}}
Let $\Q_{\mathrm{TR}} = \SO(d)$.
Next, we show that the projection is given by the polar decomposition:
\begin{equation*}
    \mathbf{X}^* = \Proj(\F) = \arg\min_{\mathbf{X} \in \Q_{\mathrm{TR}}} \|\F - \mathbf{X}\|_{F}^2 = \mathbf{R}.
\end{equation*}
We aim to solve $\min_{\mathbf{X} \in \Q_{\mathrm{TR}}} \|\F - \mathbf{X}\|_F^2$. 
Expanding the norm:
\begin{equation*}
    \|\F - \mathbf{X}\|_F^2 = \trace(\F^{\top}\F) + \trace(\mathbf{X}^{\top}\mathbf{X}) - 2\trace(\mathbf{X}^{\top}\F).
\end{equation*}
Since $\trace(\mathbf{X}^{\top}\mathbf{X}) = d$ and $\trace(\F^{\top}\F)$ are constant with respect to $\mathbf{X}$, minimizing the energy is equivalent to maximizing $\trace(\mathbf{X}^{\top}\F)$. 
This is the orthogonal Procrustes problem.
Let the SVD of $\F$ be $\F = \mathbf{U}\mathbf{\Sigma}\mathbf{V}^{\top}$. 
Then:
\begin{equation*}
    \trace(\mathbf{X}^{\top} \mathbf{U}\mathbf{\Sigma}\mathbf{V}^{\top}) = \trace(\mathbf{V}^{\top} \mathbf{X}^{\top} \mathbf{U} \mathbf{\Sigma}) = \trace(\mathbf{X}' \mathbf{\Sigma}),
\end{equation*}
where $\mathbf{X}' = \mathbf{V}^{\top} \mathbf{X}^{\top} \mathbf{U}$ is an orthogonal matrix. 
The trace is maximized when $\mathbf{X}' = \mathbf{I}$, which implies $\mathbf{V}^{\top} \mathbf{X}^{\top} \mathbf{U} = \mathbf{I}$, or equivalently, $\mathbf{X}^* = \mathbf{U}\mathbf{V}^{\top}$. 

\subsubsection{Scaling-invariant energy \eqref{eq:energy_ts}}
Let $\Q_{\mathrm{TS}} = \{ s \mathbf{I} \mid s \in \R \}$.
Next, we show that the projection is given by: 
\begin{equation*}
    \mathbf{X}^* = \Proj(\F) = \arg\min_{\mathbf{X} \in \Q_{\mathrm{TS}}} \|\F - \mathbf{X}\|_{F}^2 = s_{\mathrm{TS}} \mathbf{I}.
\end{equation*}
We seek to minimize the objective function with respect to the scalar $s$:
\begin{equation*}
    J(s) = \|\F - s\mathbf{I}\|_F^2 = \trace(\F^{\top}\F) - 2s \cdot \trace(\F) + s^2 \trace(\mathbf{I}).
\end{equation*}
Taking the derivative with respect to $s$ and setting it to zero:
\begin{equation*}
    \dv{J}{s} = -2\trace(\F) + 2s \cdot d = 0 \implies s^* = \trace(\F)/d.
\end{equation*}
Thus, the optimal projection is $\mathbf{X}^* = s_{\mathrm{TS}}\mathbf{I}$.

\subsubsection{Similarity-invariant energy \eqref{eq:energy_trs}}
Let $ \Q_{\mathrm{TRS}} = \{ s \mathbf{Q} \mid s \in \R^{+}, \mathbf{Q} \in \SO(d) \}$.
Next, we show that the projection is given by:
\begin{equation*}
    \mathbf{X}^* = \Proj(\F) = \arg\min_{\mathbf{X} \in \Q_{\mathrm{TRS}}} \|\F - \mathbf{X}\|_{F}^2 = s_{\mathrm{TRS}} \mathbf{R}.
\end{equation*}
We minimize $J(s, \mathbf{Q}) = \|\F - s\mathbf{Q}\|_F^2$.
Expanding terms:
\begin{equation*}
    J(s, \mathbf{Q}) = \trace(\F^{\top}\F) - 2s \cdot \trace(\mathbf{Q}^{\top}\F) + s^2 \trace(\mathbf{Q}^{\top}\mathbf{Q}).
\end{equation*}
Using $\trace(\mathbf{Q}^{\top}\mathbf{Q})=d$, we solve for $s$ and $\mathbf{Q}$ iteratively:
\begin{itemize}
    \item For a fixed $s > 0$, minimizing $J$ is equivalent to maximizing $\trace(\mathbf{Q}^{\top}\F)$, which yields $\mathbf{Q}^* = \mathbf{R} = \mathbf{U}\mathbf{V}^{\top}$ (same as the rotation-invariant case); 
    \item For the optimal $\mathbf{Q}^*$, we minimize with respect to $s$:
\begin{equation*}
    \frac{\partial J}{\partial s} = -2\trace(\mathbf{Q}^{*\top}\F) + 2sd = 0 \implies s^* = \frac{\trace(\mathbf{Q}^{*\top}\F)}{d}.
\end{equation*}
\end{itemize} 
Thus, the optimal projection is $\mathbf{X}^* = s^* \mathbf{Q}^* = s_{\mathrm{TRS}} \mathbf{R}$.

\subsection{Derivation of stress tensor gradient via Envelope Theorem} \label{app:grad}
In this part, we provide the detailed proof for the gradient of the unified energy function $\Psi(\F)$ with respect to the deformation gradient $\F$.

Let the energy density function be defined as the squared Frobenius distance to a constraint set $\Q$:
\begin{equation*}
    \Psi(\F) = \min_{\mathbf{Q} \in \Q} \|\F - \mathbf{Q}\|_F^2.
\end{equation*}
Let $\Proj(\F)$ denote the projection operator that maps $\F$ to the closest element in $\Q$:
\begin{equation*}
    \Proj(\F) = \arg\min_{\mathbf{Q} \in \Q} \|\F - \mathbf{Q}\|_F^2.
\end{equation*}
We can rewrite the energy as a composite function $J(\F, \mathbf{X})$, evaluated at the optimal $\mathbf{X}^* = \Proj(\F)$:
\begin{equation*}
    \Psi(\F) = J(\F, \Proj(\F)), \quad \text{where } J(\F, \mathbf{X}) = \|\F - \mathbf{X}\|_F^2.
\end{equation*}

Using the chain rule, the total derivative of $\Psi(\F)$ with respect to $\F$ is the sum of the partial derivative with respect to the first argument and the contribution from the dependence of the optimal solution on $\F$:
\begin{equation*} \label{eq:total_derivative}
    \begin{aligned}
        \frac{\partial \Psi}{\partial \F} & = \left. \frac{\partial J(\F, \mathbf{X})}{\partial \F} \right|_{\mathbf{X}=\Proj(\F)} \\
        & \quad \quad + \left. \left( \frac{\partial \Proj(\F)}{\partial \F} \right)^{\top} \odot \frac{\partial J(\F, \mathbf{X})}{\partial \mathbf{X}} \right|_{\mathbf{X}=\Proj(\F)} \\
        & = \left. 2(\F - \mathbf{X}) \right|_{\mathbf{X}=\Proj(\F)}  \\
        & \quad \quad - \left. \left( \frac{\partial \Proj(\F)}{\partial \F} \right)^{\top} \odot 2(\F - \mathbf{X}) \right|_{\mathbf{X}=\Proj(\F)} \\
        & = 2(\F - \Proj(\F)) - 2 \left( \frac{\partial \Proj(\F)}{\partial \F} \right)^{\top} \odot (\F - \Proj(\F)).
    \end{aligned}
\end{equation*}
We now show that the second term in the above is strictly zero. 
This indeed is a direct consequence of the definition of $\Proj(\F)$ as a minimizer.

Let $\mathbf{X}^* = \Proj(\F)$. 
By definition, $\mathbf{X}^*$ minimizes the objective function $J(\F, \mathbf{X})$ with respect to $\mathbf{X}$ over the manifold $\Q$. 
The first-order optimality condition states that the negative gradient of the objective function (the residual vector) must be orthogonal to the tangent space of the constraint manifold at the optimum.
Mathematically, let $\mathcal{T}_{\mathbf{X}^*} \Q$ be the tangent space of $\Q$ at $\mathbf{X}^*$. 
The optimality condition reads:
\begin{equation*} \label{eq:orthogonality}
    - \left. \frac{\partial J}{\partial \mathbf{X}} \right|_{\mathbf{X}^*} = 2(\F - \mathbf{X}^*) \perp \mathcal{T}_{\mathbf{X}^*} \mathcal{M}.
\end{equation*}
Furthermore, since $\Proj(\F)$ maps $\F$ onto the manifold $\Q$, any infinitesimal change in $\F$ results in a change $\mathrm{d}\mathbf{X}^*$ that must lie within the tangent space of the manifold. In other words, the image of the Jacobian of the projection operator lies in the tangent space:
\begin{equation*} \label{eq:tangent_condition}
    \text{range}\left( \frac{\partial \Proj(\F)}{\partial \F} \right) \subseteq \mathcal{T}_{\mathbf{X}^*} \mathcal{M}.
\end{equation*}
Combining the above, one has that their inner product (contraction) is zero.


\subsection{Proof of Theorem \ref{thm:main_thm}}
\label{app:main_thm}
    \subsubsection{Case $\Psi = \Psi_{\mathrm{T}}$}
   The total energy is given by $V(\p) = \sum_{e} w_e \|\F_e(\p) - \mathbf{I}\|_F^2$. 
   Since $\F_e(\p)$ is affine in $\p$ and the Frobenius norm is strictly convex, $V(\p)$ is convex. 
   To prove the uniqueness modulo translation, we examine the null space of its Hessian $\mathbf{H}$.
   
    Consider a perturbation $\delta \p$. The second-order variation of the energy is:
    \begin{equation*}
        \delta^2 V = \delta \p^{\top} \mathbf{H} \delta \p = 2 \sum\nolimits_{e} w_e \| \delta \F_e \|_F^2.
    \end{equation*}
    The Hessian is singular if and only if $\delta^2 V = 0$. 
    Since weights $w_e > 0$, this requires $\delta \F_e = \mathbf{0}$ for all elements $e$.
    Using the definition $\F_e = \mathbf{S}_{e} \mathbf{S}_{e,\mathrm{ref}}^{-1}$, the condition $\delta \F_e = \mathbf{0}$ implies:
    \begin{equation*}
        \delta \mathbf{S}_{e} \mathbf{S}_{e,\mathrm{ref}}^{-1} = \mathbf{0} \implies \delta \mathbf{S}_{e} = \mathbf{0},
    \end{equation*}
    since $\mathbf{S}_{e,\mathrm{ref}}$ is invertible (non-degenerate assumption).
    The condition $\delta \mathbf{S}_{e} = \mathbf{0}$ implies that for any edge $(i,j)$ within simplex $e$, the relative displacement is zero:
    \begin{equation*}
        \delta \p_j - \delta \p_i = \mathbf{0} \implies \delta \p_j = \delta \p_i.
    \end{equation*}
    This means all agents within an element must move by the same translation vector. Under the assumption that each element is connected (i.e., any two agents are connected by a path of overlapping elements), this local translation constraint propagates globally:
    \begin{equation*}
        \delta \p_1 = \delta \p_2 = \dots = \delta \p_N = \mathbf{t}, \quad \mathbf{t} \in \mathbb{R}^d.
    \end{equation*}
    Thus, the null space of $\matbf{H}$ spans exactly the subspace of global translations. 
    In the quotient space $\mathbb{R}^{dN} / \Span(\mathbf{1}_N \otimes \mathbf{I}_d)$, the Hessian is positive definite, implying strict convexity and a unique global minimum for the formation shape. 
    The statement follows from the convergence of the gradient algorithm on strictly convex functions.
    
    \subsubsection{Case $\Psi = \Psi_{\mathrm{TR}}$}
    Consider the system at a global equilibrium $\p^*$ where $\F_e = \mathbf{R}$ for all $e$. 
    Without loss of generality, let us analyze the perturbation in a local frame where $\mathbf{R} = \mathbf{I}$. 
    Let $\p = \p^* + \delta \p$. The corresponding perturbation in the deformation gradient is $\delta \F_e$.
    
    The second-order variation of the local energy $\Psi_{\mathrm{TR}}(\F_e) = \min_{\mathbf{R}} \|\F_e - \mathbf{R}\|_F^2$ near $\F_e=\mathbf{I}$ is given by the squared norm of the linearized strain tensor (symmetric part of $\delta \F_e$, see Appendix \ref{app:TR} for detailed derivation):
    \begin{equation*}
        \delta^2 \Psi_{\mathrm{TR}} \approx \| \text{sym}(\delta \F_e) \|_F^2 = \frac{1}{4} \| \delta \F_e + \delta \F_e^{\top} \|_F^2.
    \end{equation*}
    The total Hessian quadratic form is:
    \begin{equation*}
        \delta \p^{\top} \mathbf{H} \delta \p = \sum\nolimits_{e} w_e \| \text{sym}(\delta \F_e) \|_F^2.
    \end{equation*}
    The Hessian is singular if and only if the quadratic form is zero, which implies $\text{sym}(\delta \F_e) = \mathbf{0}$ for all $e$. 
    This condition $\delta \F_e + \delta \F_e^{\top} = \mathbf{0}$ means that each local deformation is an infinitesimal rotation (skew-symmetric matrix $\bm{\Omega}_e$):
    \begin{equation*}
        \delta \p_j - \delta \p_i = \bm{\Omega}_e (\q_j - \q_i), \quad \forall i,j \in \mathcal{V}_e.
    \end{equation*}
    This implies that the motion of vertices in element $e$ is a rigid body velocity field: $\delta \p_i = \bm{\Omega}_e \q_i + \mathbf{t}_e$.
    
    Consider two elements $e$ and $e'$ sharing a non-degenerate face (at least $d$ vertices in $\mathbb{R}^d$).
    The continuity of the velocity field across the shared vertices requires:
    \begin{equation*}
        \bm{\Omega}_e \q + \mathbf{t}_e = \bm{\Omega}_{e'} \q + \mathbf{t}_{e'}, \quad \forall \q \in \mathcal{V}_e \cap \mathcal{V}_{e'}.
    \end{equation*}
    Since the shared vertices span a $(d-1)$-dimensional affine subspace, this equality enforces $\bm{\Omega}_e = \bm{\Omega}_{e'}$ and $\mathbf{t}_e = \mathbf{t}_{e'}$. By the connectivity of the dual graph (Assumption 2), this constraint propagates to all elements, implying $\bm{\Omega}_e \equiv \bm{\Omega}$ and $\mathbf{t}_e \equiv \mathbf{t}$ for all $e$.
    
   \subsubsection{Case $\Psi = \Psi_{\mathrm{TS}}$}
   Similar to Case 1), the global convexity follows from the affine mapping from $\p$ to $\F_e(\p)$.
    Consider the Hessian quadratic form for a perturbation $\delta \p$:
    \begin{equation*}
        \delta^2 V = \delta \p^{\top} \mathbf{H} \delta \p = 2 \sum\nolimits_{e} w_e \| \mathrm{dev}(\delta \F_e) \|_F^2,
    \end{equation*}
    where $\mathrm{dev}(\mathbf{A}) = \mathbf{A} - \trace(\mathbf{A})/d \cdot \mathbf{I}$ denotes the deviatoric part of matrix $\mathbf{A}$.
    The Hessian is singular if and only if $\mathrm{dev}(\delta \F_e) = \mathbf{0}$ for all elements $e$.
    The condition $\mathrm{dev}(\delta \F_e) = \mathbf{0}$ implies that the local deformation gradient is a scaled identity matrix:
    \begin{equation*}
        \delta \F_e = s_e \mathbf{I}, \quad s_e \in \mathbb{R}.
    \end{equation*}
    By definition of $\F_e$, one has:
    \begin{equation*}
        \delta \p_j - \delta \p_i = s_e (\q_j - \q_i).
    \end{equation*}
    This implies that the motion of vertices in element $e$ is a uniform scaling field: $\delta \p_i = s_e \q_i + \mathbf{t}_e$.

    Now, consider two elements $e$ and $e'$ sharing a non-degenerate face (or edge). 
    The continuity of the velocity field across the shared vertices requires:
    \begin{equation*}
        s_e \q + \mathbf{t}_e = s_{e'} \q + \mathbf{t}_{e'}, \quad \forall \q \in \mathcal{V}_e \cap \mathcal{V}_{e'}.
    \end{equation*}
    Since the shared vertices span a $(d-1)$-dimensional affine subspace, this equality enforces $s_e = s_{e'}$ and $\mathbf{t}_e = \mathbf{t}_{e'}$.
    The rest of the proof follows that of Case 2). 
    
    \subsubsection{Case $\Psi = \Psi_{\mathrm{TRS}}$}
    Similar to case 2), without loss of generality, we analyze the perturbation near the equilibrium $\F = \mathbf{I}$. 
The second order variation corresponds to the squared norm of the deviatoric strain (see Appendix \ref{app:TRS} for detailed derivation):
    \begin{equation*}
        \delta^2 \Psi_{\mathrm{TRS}} \approx \| \bm{\epsilon} - \trace(\bm{\epsilon})/d \cdot \mathbf{I} \|_F^2 = \| \mathrm{dev}(\mathrm{sym}(\delta \F)) \|_F^2.
    \end{equation*}
    
    The total Hessian is singular if and only if the local deviatoric strain is zero for all elements:
    \begin{equation*}
        \mathrm{dev}(\mathrm{sym}(\delta \F_e)) = \mathbf{0} \implies \mathrm{sym}(\delta \F_e) = s_e \mathbf{I}, \quad \exists s_e \in \mathbb{R}.
    \end{equation*}
    Combining with the rotational part, the general form of the zero-energy deformation gradient perturbation is:
    \begin{equation*}
        \delta \F_e = s_e \mathbf{I} + \bm{\Omega}_e.
    \end{equation*}
    This implies the local velocity field is an infinitesimal similarity transformation:
    \begin{equation*}
        \delta \p_j - \delta \p_i = (s_e \mathbf{I} + \bm{\Omega}_e) (\q_j - \q_i),
    \end{equation*}
    which further implies that the motion of vertices in element $e$ is given by:
    \begin{equation*}
        \delta \p_i = (s_e \mathbf{I} + \bm{\Omega}_e) \q_i + \vecbf{t}_e.
    \end{equation*}
    The rest of the proof follows that of Case 2).



\subsection{Second order variation of $\Psi_{\mathrm{TR}}$}
\label{app:TR}
    We rigorously derive the quadratic approximation of $\Psi_{\mathrm{TR}}(\F) = \min_{\matbf{X} \in \SO(d)} \|\F - \matbf{X}\|_F^2$ near $\F = \mathbf{I}$.
    Let $\F = \mathbf{I} + \delta \F$. The rotation matrix $\matbf{X}$ near identity can be approximated by its first-order Taylor expansion (Lie algebra element):
    \begin{equation*}
        \matbf{X} \approx \mathbf{I} + \bm{\Omega}, \quad \text{where } \bm{\Omega}^{\top} = -\bm{\Omega} \text{ (skew-symmetric)}.
    \end{equation*}
    The minimization problem becomes:
    \begin{equation*}
    \begin{aligned}
        \Psi_{\mathrm{TR}}(\mathbf{I} + \delta \F) 
        & \approx \min_{\bm{\Omega}} \| (\mathbf{I} + \delta \F) - (\mathbf{I} + \bm{\Omega}) \|_F^2 \\
        & = \min_{\bm{\Omega}} \| \delta \F - \bm{\Omega} \|_F^2.
    \end{aligned}
    \end{equation*}
    We decompose the perturbation $\delta \F$ into its symmetric (strain) and skew-symmetric (rotation) parts:
    \begin{equation*}
        \delta \F = \underbrace{\frac{\delta \F + \delta \F^{\top}}{2}}_{\bm{\epsilon} \text{ (sym)}} + \underbrace{\frac{\delta \F - \delta \F^{\top}}{2}}_{\bm{\omega} \text{ (skew)}}.
    \end{equation*}
    Substituting this into the norm:
    \begin{equation*}
        \| \delta \F - \bm{\Omega} \|_F^2 = \| \bm{\epsilon} + (\bm{\omega} - \bm{\Omega}) \|_F^2.
    \end{equation*}
    A crucial property of the Frobenius norm is the orthogonality between symmetric and skew-symmetric matrices ($\langle \mathbf{S}, \mathbf{A} \rangle_{F} = 0$). Thus, the cross-term vanishes:
    \begin{equation*}
        \| \bm{\epsilon} + (\bm{\omega} - \bm{\Omega}) \|_F^2 = \| \bm{\epsilon} \|_F^2 + \| \bm{\omega} - \bm{\Omega} \|_F^2.
    \end{equation*}
    To minimize this energy, the optimal infinitesimal rotation is clearly $\bm{\Omega}^* = \bm{\omega}$ (which cancels out the rotational part of the perturbation). The remaining energy is purely the squared norm of the symmetric part:
    \begin{equation*}
        \Psi_{\mathrm{TR}} \approx \| \bm{\epsilon} \|_F^2 = \frac{1}{4} \| \delta \F + \delta \F^{\top} \|_F^2.
    \end{equation*}

\subsection{Second order variation of $\Psi_{\mathrm{TRS}}$}
\label{app:TRS}
    We derive the quadratic approximation of the similarity-invariant energy $\Psi_{\mathrm{TRS}}(\F) = \min_{s \in \mathbb{R}^{+}, \matbf{R} \in \SO(d)} \|\F - s\matbf{R}\|_F^2$ near $\F = \mathbf{I}$.
    Let $\F = \mathbf{I} + \delta \F$.
This energy allows both rotation and uniform scaling. 
The target manifold is the group of scaled rotation matrices $s\mathbf{R}$. 
Near the identity $\mathbf{I}$, a similarity transformation can be approximated as:
    \begin{equation*}
        s\mathbf{R} \approx (1 + \gamma)\mathbf{I} + \bm{\Omega},
    \end{equation*}
    where $\gamma \in \mathbb{R}$ represents infinitesimal scaling and $\bm{\Omega}$ represents infinitesimal rotation.
    The energy density becomes:
    \begin{equation*}
        \Psi_{\mathrm{TRS}}(\mathbf{I} + \delta \F) \approx \min_{\gamma, \bm{\Omega}} \| \delta \F - (\gamma\mathbf{I} + \bm{\Omega}) \|_F^2.
    \end{equation*}
    Decomposing $\delta \F$ into its symmetric part $\bm{\epsilon}$ and skew-symmetric part $\bm{\omega}$, the norm splits due to orthogonality:
    \begin{equation*}
        \| (\bm{\epsilon} - \gamma\mathbf{I}) + (\bm{\omega} - \bm{\Omega}) \|_F^2 = \| \bm{\epsilon} - \gamma\mathbf{I} \|_F^2 + \| \bm{\omega} - \bm{\Omega} \|_F^2.
    \end{equation*}
    The optimal rotation is still $\bm{\Omega}^* = \bm{\omega}$. The optimal scaling $\gamma^*$ minimizes the symmetric part:
    \begin{equation*}
        \min_{\gamma} \| \bm{\epsilon} - \gamma\mathbf{I} \|_F^2 \implies \gamma^* = \trace(\bm{\epsilon})/d = \trace(\delta \F)/d.
    \end{equation*}
    The residual energy corresponds to the squared norm of the deviatoric strain:
    \begin{equation*}
        \Psi_{\mathrm{TRS}} \approx \| \bm{\epsilon} - \trace(\bm{\epsilon})/d \cdot \mathbf{I} \|_F^2 = \| \text{dev}(\text{sym}(\delta \F)) \|_F^2.
    \end{equation*}


\bibliographystyle{IEEEtran}
\bibliography{ref.bib}

\end{document}